\newtheorem{proposition}{Proposition}
\def\phi{\varphi}
\def\({\left(}
\def\){\right)}
\def\b0{{\mathbf{0}}}
\newtheorem{Lemma}{Lemma}
\newtheorem{Remark}{Remark}
\title{Efficient Multiuser AI Downloading via Reusable Knowledge Broadcasting}
\author{
    Hai~Wu, Qunsong~Zeng, and Kaibin~Huang
    \thanks{
        The authors are with the Department of Electrical and Electronic Engineering, The University of Hong Kong, Hong Kong (Email: \{wuhai, qszeng, huangkb\}@eee.hku.hk). 
        Corresponding author: K. Huang.
    }
}
\newcommand{\removelatexerror}{\let\@latex@error\@gobble}
\begin{document}
\maketitle
\begin{abstract}
For the \emph{sixth-generation} (6G) mobile networks, in-situ model downloading has emerged as an important use case to enable real-time adaptive \emph{artificial intelligence} (AI) on edge devices. 
However, the simultaneous downloading of diverse and high-dimensional models to multiple devices over wireless links presents a significant communication bottleneck.
To overcome the bottleneck, we propose the framework of \emph{model broadcasting and assembling} (MBA), which represents the first attempt on leveraging \emph{reusable knowledge}, referring to shared parameters among tasks/models, to enable parameter broadcasting to reduce communication overhead or latency.
The MBA framework comprises two key components. 
The first, the MBA protocol, defines the system operations including parameter selection from an AI library, power control for broadcasting, and model assembling at devices.
The protocol features the use of \emph{Shapley value} as a metric for measuring parameters' reusability. 
The second component is the joint design of \emph{parameter-selection-and-power-control} (PS-PC), which provides guarantees on devices' model performance and aims to minimize the downloading latency.
The corresponding optimization problem is simplified by decomposition into the sequential PS and PC sub-problems without compromising its optimality. 
The PS sub-problem is solved efficiently by designing two efficient algorithms. 
On one hand, the low-complexity algorithm of greedy parameter selection features the construction of task-oriented candidate model sets and a greedy selection metric for choosing the sets of model blocks for broadcasting, both of which are designed under the criterion of maximum reusable knowledge among tasks. 
On the other hand, the optimal tree-search algorithm gains its efficiency via the proposed construction of a compact binary tree pruned using model architecture constraints and an intelligent branch-and-bound search on the tree that fathoms nodes via solving a linear program that integer-relaxes the PS sub-problem. 
Last, given optimal PS, the optimal PC policy is derived in closed form by transforming the PC sub-problem into the conventional problem of energy-efficient transmission. 
Through extensive experiments conducted on real-world datasets, our results demonstrate the substantial reduction in downloading latency achieved by the proposed MBA design compared to traditional unicasting-based model downloading. 
\end{abstract}

\begin{IEEEkeywords}
Edge AI, in-situ model downloading, knowledge reuse, broadcasting, power control
\end{IEEEkeywords}

\section{Introduction}
\label{sec: introduction}

Compared with past generations, a distinctive feature of the \emph{sixth-generation} (6G) mobile network is the expected ubiquitous deployment of machine learning and \emph{artificial intelligence} (AI) algorithms at the network edge \cite{Whitepaper_Huawei}.
In the two resultant active areas, \emph{edge learning} concerns the deployment of distributed learning algorithms to distill intelligence from mobile data while \emph{edge AI} focuses on the delivery of inference services to mobile devices \cite{6G_Letaief, Dis_learning_Chen}.
There exist two approaches for edge AI, one, termed \emph{split inference}, involves uploading of features extracted from raw data at devices to servers for remote inference using large-scale models \cite{Split_inf_Shao}.
The other approach, termed \emph{in-situ model downloading} and considered in our work, is to adapt AI-model downloading for on-device inference to users' current situation needs (e.g., context and location) \cite{Model_downloading_KB}.
Both approaches are being considered for standardization \cite{3GPP}.
Given limited radio resources, the implementation of edge AI is stymied by a communication bottleneck resulting from the high dimensionality of features and models transmitted over-the-air.
In this work, we propose a novel framework, called \emph{model broadcasting and assembling} (MBA), that exploits reusable knowledge, referring to the sharing of model parameters among heterogeneous tasks to reduce the communication overhead of in-situ model downloading.

Split inference obtains its name from the fact that a pair of device and server models are jointly trained such that they can be considered as sub-models resulting from splitting a global model \cite{Split_inf_Shao, JSCC_Deniz, Device_edge_Zhou}.
Deployed on the mobile-edge-computing platform, split inference provides mobile devices access to large-scale models in an edge cloud.
This endows devices with powerful AI capabilities that transcend their limitation in computing resources and energy.
To overcome the said communication bottleneck as well as accommodate the heterogeneity of mobile hardware, researchers have designed algorithms to optimally adapt the model-splitting point to available on-device computation resources, channel states, bandwidth, and latency requirements \cite{Device_edge_Zhou, Edge_AI_Li, Cooper_inference_Shao, Bottleneck++_Shao}.
The optimization is shown to yield significant system improvements in terms of end-to-end latency, accuracy, or throughput \cite{Wireless_AI_Park, PFTX_Lan, Batch_Liu}.
Despite these research efforts, the communication bottleneck is difficult to eliminate in scenarios with data-intensive applications and many devices, such as camera surveillance networks and collaborative auto-pilots.
Furthermore, potential information leakage due to feature uploading exposes users to security threats.

In-situ model downloading can address these issues as local inference is secure, and communication overhead is not excessive if task switching is infrequent and models of small-to-medium sizes suffice \cite{Model_downloading_KB}.
While the technology leverages increasingly powerful mobile AI processors, its practicality can be further improved using fast-advancing techniques for model pruning and quantization \cite{Compression_NN_Li}, and early exiting \cite{Batch_Liu}.
Though in-situ model downloading is embraced in the industry as a key edge-AI technology, relevant research is still in its nascent stage \cite{3GPP, Model_downloading_KB}.
In the largely uncharted area, one remotely relevant study is conducted on dynamically caching models at edge servers via downloading from the central cloud in response to incoming inference task requests from users \cite{Model_placement_Yan}.
In our view, a key challenge for in-situ model downloading, namely its communication bottleneck, arises in the scenario where many devices execute heterogeneous tasks and trigger the simultaneous downloading of a large number of different models.

We propose to tackle this challenge by exploiting the latest advancements in the area of \emph{knowledge reuse} that refers to the use of trained AI models to perform a \emph{different but related} task \cite{Foundation_Model_2021, Pathway}.
Two main modalities in the area are transfer learning and model reassembling (also known as model stitching).
The essence of transfer learning is to set a pre-trained model targeting one task as the initial point or supervisor for further training of another model to support other tasks \cite{TL_survey}.
Though broadcasting an identical model to devices for local transfer learning is communication efficient, retraining is a computation-intensive and energy-draining process that is usually unaffordable for devices.
Being the state-of-the-art of knowledge/model reuse, model reassembly \cite{Reassembly_2022} also has the potential of being developed into a practical solution for communication-efficient in-situ model downloading.
The technology generates a model for a task by reassembling parametric blocks borrowed from multiple pre-trained models targeting different tasks.
Hence, such a methodology is considered to be a special variant of transfer learning since the usage of multiple well-trained models on shelves.
As an option to enhance the model performance, extra ``stitching'' layers can be inserted in-between the borrowed blocks and tuned by few-shot training, which is much simpler than full-model training in transfer learning \cite{SNNet_2023, TL_survey}.
One practical landing of model reusing is the \emph{Pathway model} provided by Google \cite{Pathway}.
Multiple tasks are executed via traversing along distinct paths across the model that comprises a hierarchy of connected parametric blocks \cite{muNet_2022}.
The innovation of the model design lies in its compactness via extensive block sharing among different task paths.

In this work, we make the first attempt on exploiting knowledge reuse to enable communication-efficient in-situ model downloading for multiple devices executing heterogeneous tasks. 
The key feature of the proposed MBA framework is to replace the traditional unicasting of task-oriented models with the broadcasting of parametric blocks that help different tasks simultaneously.
Thereby, the downlink spectrum is better utilized and the downloading latency is reduced.
The main contributions of this work are the design of the MBA protocol that defines system operations and the knowledge-aware power control algorithm for joint \emph{parameter selection and power control} (PS-PC) for minimizing downloading latency. 
These two components of the MBA framework are summarized as follows.

\begin{itemize}
    \item \textbf{MBA Protocol.}
    The proposed MBA protocol consists of the following three steps. 
    First, a set of devices executing heterogeneous tasks send downloading requirements to an edge server together with their \emph{Quality-of-Service} (QoS) requirements in terms of task-specific model performance.
    Second, the server retrieves a set of parametric blocks from an AI library for sequential broadcasting to devices under their QoS requirements.
    Following \cite{Model_downloading_KB}, the library comprises a large number of blocks obtained by partitioning a broad range of pre-trained models.
    We propose to measure the reusability of each block for a given task using the well-known \emph{Shapley value} \cite{Shapley_Game_Theory_2002}.
    Third, in each round for transmitting a single block, the server sets the transmission power that determines the set of \emph{requesting} devices that can successfully receive the block over fading channels.
    Last, upon the completion of downloading, each device assembles a model optimized for the local task by selecting a subset from the received blocks.
    \item \textbf{Joint PS-PC.}
    The two server operations in the MBA protocol are coupled.
    On the one hand, parameter (block) selection aims to maximize the reusability of broadcast model blocks among requesting devices. On the other hand, the set of requesting devices is determined by power control.
    Exploiting the coupling, we consider the joint optimization of PS-PC to minimize the total downloading latency given an energy budget. It is proved that the complex problem can be simplified by decomposition into the sequential PS and PC sub-problems without compromising its optimality. The sub-problems are solved separately to yield a set of MBA algorithms as described below. 
    \begin{enumerate}
        \item \emph{Greedy parameter selection:} The design of the low-complexity algorithm, which approximately solves the PS sub-problem, has two main features: a) the construction of candidate model sets customized for the tasks of served devices and b) a greedy-selection metric that builds on blocks' reusability scores and is used for choosing model blocks from the candidate sets for broadcasting. Both the construction method and metric are designed under the same criteria of maximum reusable knowledge among tasks. 
        \item \emph{Optimal parameter selection:} The algorithm, which optimally solves the mixed-integer PS sub-problem via a tree search, gains its efficiency via the proposed construction of a compact binary tree aggressive pruned using model architecture constraints in the MBA protocol. Then an intelligent \emph{branch-and-bound} (B\&B) search on the tree is employed that fathoms nodes via solving a linear program that integer-relaxes the PS sub-problem. 
        \item \emph{Optimal power control:} Given parameter selection, the optimal PC policy is derived in closed form by transforming the PC sub-problem into the conventional problem of energy-efficient transmission. 
    \end{enumerate}
\end{itemize}

Experiments on real datasets demonstrate nearly 50\% latency reduction over 20 real-world image recognition AI tasks achieved by the optimized MBA framework as opposed to the conventional unicasting design without reusing knowledge among tasks.

The rest of this paper is organized as follows. 
In Section~\ref{sec: models and metrics}, models and metrics are introduced. 
The proposed MBA framework is presented in Section~\ref{sec: overview of model broadcasting and assembling}.
Then the problem of joint PS-PC is formulated and analyzed in Section~\ref{sec: PSPC}. 
The PS algorithms are designed in Section~\ref{sec: optimal parameter selection} and 
the optimal PC algorithm is presented in Section~\ref{sec: optimal power control}.
Experimental results are provided in Section~\ref{sec: experiments}, followed by concluding remarks in Section~\ref{sec: conclusions}.

\section{Models and Metrics}
\label{sec: models and metrics}

\begin{figure*}[t]
  \centering
  \vspace{3mm}
  \includegraphics[width=0.82\textwidth]{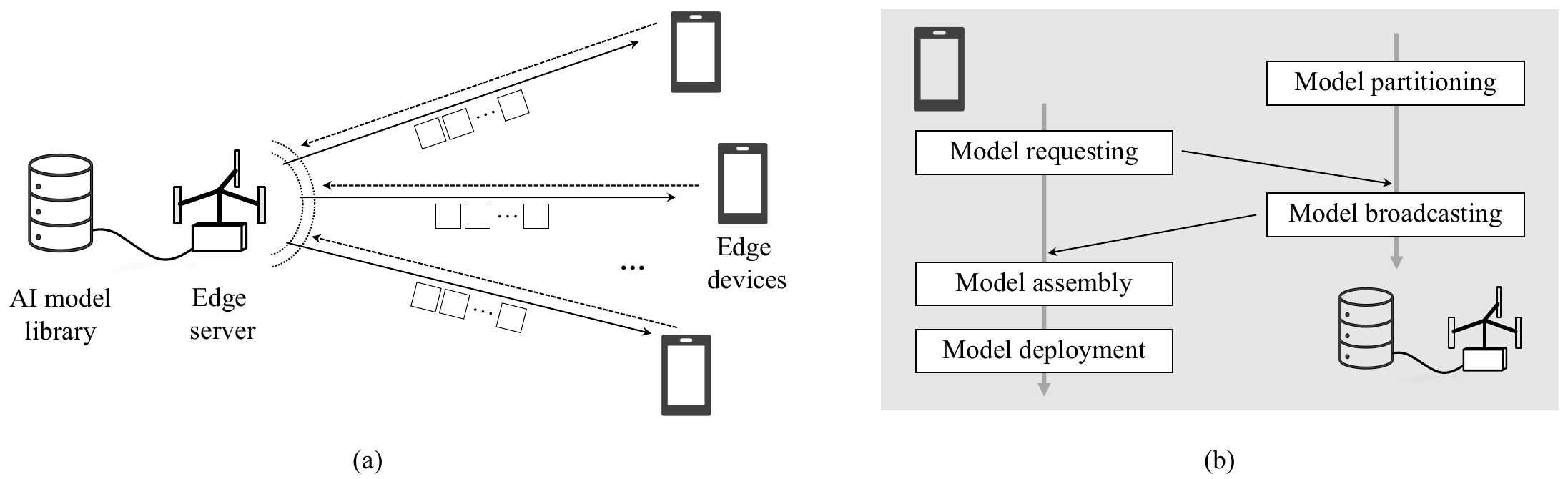}
  \caption{(a) Multi-user AI downloading system. (b) MBA operations and protocol.}
  \label{fig: framework}
  \centering
\end{figure*}

Consider a single-cell system where an edge server provides the AI downloading service to $K$ edge devices, as illustrated in Fig.~\ref{fig: framework}(a).
Their indices are denoted as $\mathcal{K} = \{1, 2, \cdots, K\}$. 
The devices aim to use downloaded models to accomplish heterogeneous AI tasks, e.g., object recognition and video rendering for augmented reality. 
Relevant models and metrics are described as follows.

\subsection{Block Cascading Model Architecture}
\label{subsec: deep neural networks}

\begin{figure*}[t]
  \centering
  \vspace{2mm}
  \includegraphics[width=0.82\textwidth]{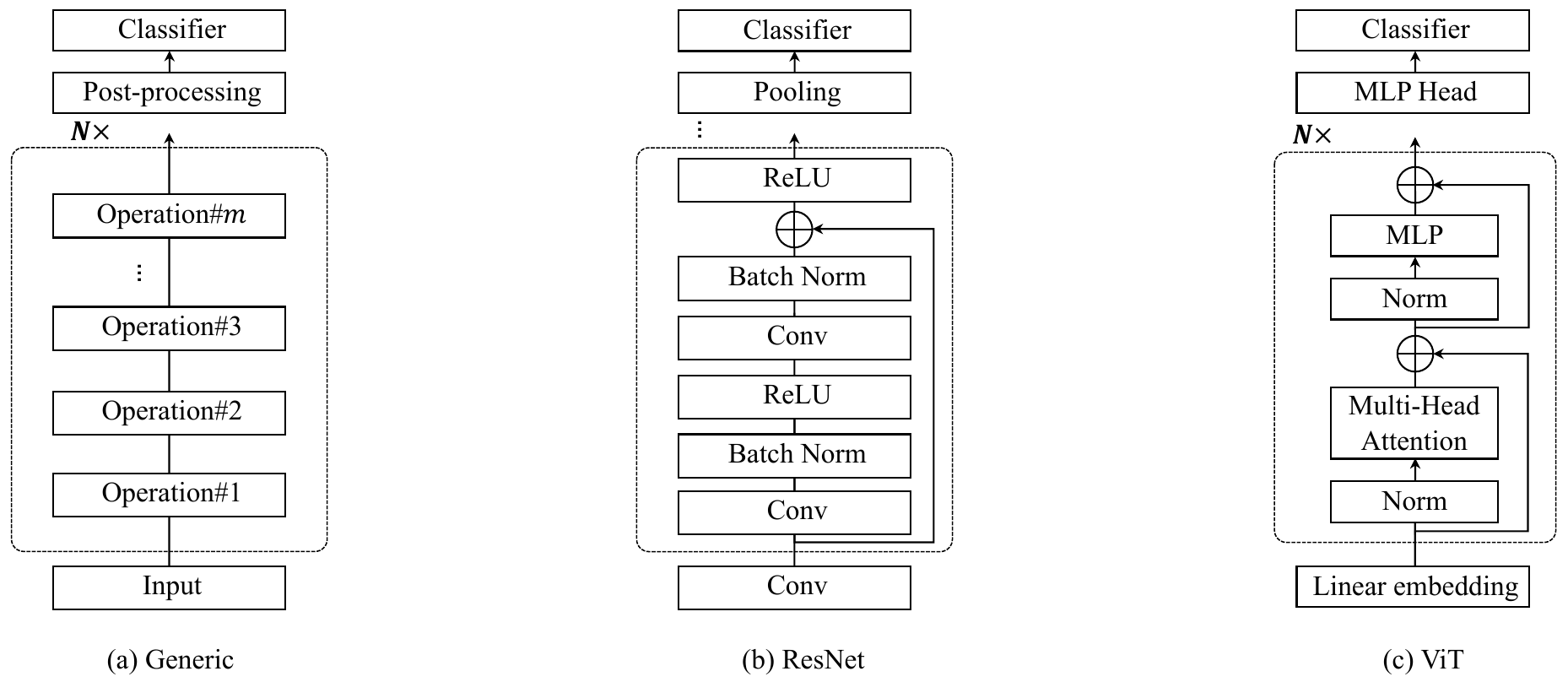}
  \caption{Three examples of block cascading architecture for DNNs, where a model block is framed in a dashed box. (a) The generic block cascading DNN architecture. (b) The neural network architecture of ResNet \cite{ResNet_2016}. Conv denotes a convolutional layer, and model blocks can be appended multiple times according to the different model sizes (i.e., number of layers). (c) The neural network architecture of ViT \cite{ViT_2021}. The transitional model block is repeated $N$ times based on different size settings (base/large/huge). Norm denotes layer normalization and MLP is a multi-layer perceptron.}
  \label{fig: DNN block}
  \centering
\end{figure*}

The state-of-the-art AI models are built on \textit{deep neural networks} (DNNs).
As illustrated in Fig.~\ref{fig: DNN block}, they are characterized by cascaded layers with each layer of neurons executing a specific operation on the input data. 
Then, the feedforward connections propagate the output of one layer to the next layer.
Consequently, the layerwise recursive operations decompose a network architecture into a series of cascaded blocks, each of which combines multiple operations as illustrated in Fig.~\ref{fig: DNN block}(a).
The detailed architecture of two widely-used DNN models, the ResNet that is a \textit{convolutional neural networks} (CNN) and the \textit{vision transformers} (ViT) are presented as examples in Fig.~\ref{fig: DNN block}(b) and (c) respectively. 
Excluding a dedicated input layer for feature sampling and a specialized final fully connected layer for classification or regression, the operations are typically implemented using cascaded operation blocks.
For a CNN, the operations within a block consist of sequential layers for convolution, normalization, activation, and residual addition.
Similarly, a ViT block consists of multi-head attention, normalization, residual addition, and multi-layer perceptions.
Thus, the operations on input data $\mathbf{x}$ can be represented by composite block operations, i.e.,
\begin{align}
    f(\{\mathbf{b}_n\}; \mathbf{x}) = \underbrace{f_{N}(\mathbf{b}_{N}; f_{N-1}(\mathbf{b}_{N-1}; \cdots f_1(\mathbf{b}_1; \mathbf{x}) \cdots ))}_\text{nested for $N$ times},
\end{align}
where $f_{n}(\cdot)$ denotes the operations of the $n$-th block with parameter $\mathbf{b}_n$ and $N$ denotes the number of blocks in one DNN model.
The entire model parameter set is represented by $\{\mathbf{b}_n\}$.
The edge server aims to construct $K$ DNN models using the AI model library to serve $K$ edge devices with heterogeneous AI tasks.
As a basis for constructing knowledge-reuse models for MBA, task-specific models refer to those models that are optimized for individual tasks without regard to unintended tasks.
Let the task-specific model for device $k$ be denoted as $\mathcal{S}_k = \{s_{k, 1}, s_{k, 2}, \cdots, s_{k, N}\}$, where the block at position $n$ is denoted as $s_{k, n}$.
Finally, each model block is compressed into $Q$ bits.

\subsection{Communication Model} 
\label{subsec: communication models}

Consider an arbitrary device, say device $k$.
The instantaneous \emph{channel state information} (CSI) is assumed to be known at the server. 
The channel power gain of the link between the server and the device is denoted as $H_k$, which remains constant throughout the entire downloading duration. 
The edge server sequentially broadcasts selected model blocks from the AI library since they can be potentially reused by all devices.
The transmission power for broadcasting block $s_{i,n}$ is denoted by $p_{i,n}$, which is constant given fixed CSI during the broadcasting process. 
Let $B$ represent the downlink bandwidth. 
Then the link rate for device $k$ during the broadcasting of $s_{i,n}$ is given as
\begin{equation}
    R_{i,n}^{(k)} = B \log_2\left(1 + \frac{p_{i,n}H_k}{N_0}\right), \quad \forall k \in \mathcal{K},
\end{equation}
where $N_0$ is the white Gaussian noise power. 
The edge server need to adjust the broadcasting (encoding) rates, denoted as ${C}_{i,n}$, of individual model blocks, which depend on transmission power $\{p_{i,n}\}$, to control their broadcasting latency. 
Specifically, given parameter selection, a block, say $s_{i,n}$, to be delivered to device $k$ is indicated by $\alpha_{i,n}^{(k)} = 1$ or otherwise $\alpha_{i,n}^{(k)} = 0$.
There can be multiple devices requiring $s_{i,n}$, which are called \emph{requesting} devices for this block.
Then, the broadcasting rate, $C_{i,n}$, of $s_{i,n}$, is determined by the requesting device with the worst channel state, i.e., $C_{i,n} = \min_{k \in \{k^{\prime} \mid \alpha_{i,n}^{(k^{\prime})} = 1\} } R_{i,n}^{(k)}$. 
As a result, the model blocks set successfully received at device $k$, denoted by $\Tilde{\mathcal{S}}_{k}$, is given as $\Tilde{\mathcal{S}}_{k} = \cup_{i=1,n=1}^{K, N} \alpha_{i, n}^{(k)} s_{i,n}$.
Subsequently, a block subset of $\Tilde{\mathcal{S}}_{k}$, is used to assemble a model for task execution on device $k$.

\subsection{Performance Metrics}
\label{subsec: QoS Metrics}

\subsubsection{Model Broadcasting Latency}
\label{subsubsec: downloading latency}

The metric measures the total latency for broadcasting such that the selected model blocks are successfully delivered to their requesting devices.
With each block comprising $Q$ bit, the broadcasting of a specific block, say $s_{i,n}$, to the requesting devices requires the latency denoted as $T_{i,n}$ and given as 
\begin{equation}
    T_{i,n}= \max_{k} \frac{\alpha_{i,n}^{(k)} Q}{R_{i,n}^{(k)}},
\end{equation}
which is determined by the requesting device with the worst channel state. 
Then the total broadcasting latency can be written as 
\begin{equation}
    T = \sum_{i=1}^K \sum_{n=1}^N T_{i,n}.
\end{equation}

\subsubsection{On-device Model Performance}
\label{subsubsec: model score}

The prediction performance of a model assembled by a device can be measured using a conventional metric such as inference accuracy or uncertainty, which, however, are intractable. 
For tractability, we instead adopt a surrogate metric, termed \emph{reusability score}, which sums the utility scores (i.e., Shapley values as elaborated in the following section) of model blocks for the current task across the target model. 
The metric is a popular \emph{explainable AI} (XAI) tool advocated in the literature to ensure transparency and accountability of different parameter components across diverse AI systems \cite{XAI_2019}.

\section{Overview of Model Broadcasting and Assembling (MBA)}
\label{sec: overview of model broadcasting and assembling}

\begin{figure}[t]
  \centering
  \includegraphics[width=0.49\textwidth]{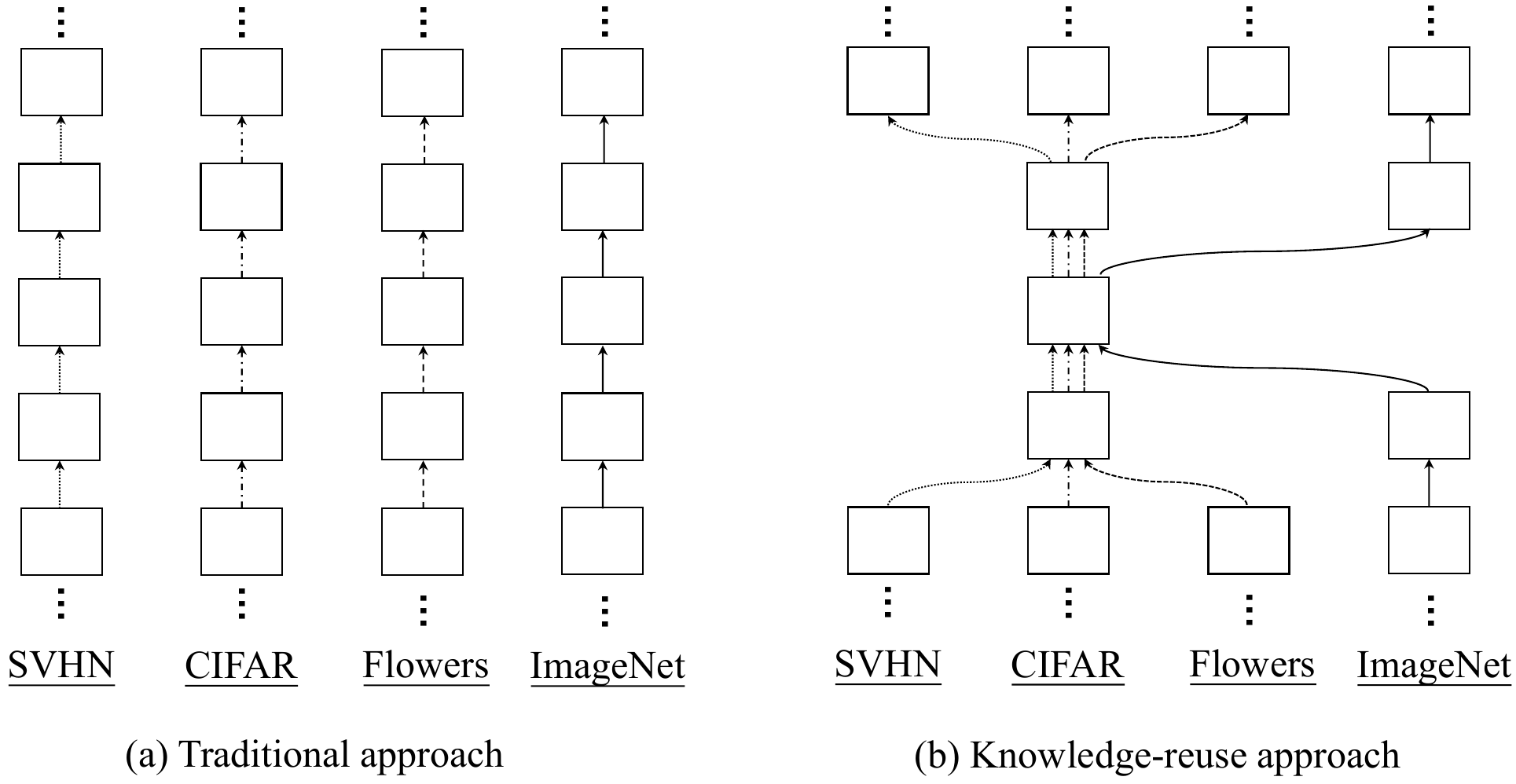}
  \caption{Model block forwarding process of AI models, where different image recognition tasks are distinguished by different directed lines. (a) Dedicated AI models in the traditional approach. (b) Multi-task AI models with shared model blocks in the knowledge-reuse approach.}
  \label{fig: block reuse}
  \centering
\end{figure}

In this section, we present the basic techniques and protocol in the MBA framework.
Underpinning the framework is the knowledge-reuse principle illustrated in Fig.~\ref{fig: block reuse}.
Therein, sharing common model blocks (i.e., reusable knowledge) enables the design of multi-task AI models more compact than the traditional dedicated models.
In the sub-sections, we design the basic techniques of model partitioning and parameter selection at the server and on-device model assembling.
Then building on the techniques, the MBA protocol is proposed.

\subsection{Enabling Techniques for MBA}

\subsubsection{AI Library Construction} 
\label{subsec: model partitioning and task-aware broadcasting}

The edge server is provisioned with an AI library constructed as follows.
First, consider a given large set of DNN models that as a whole can accommodate a wide range of tasks.
Each model is partitioned into $N$ parameter blocks.
Between every paired of models, their blocks at the same position $n$ in the model block sequences, i.e., $\mathcal{S}_k, \forall k \in \mathcal{K}$, can be swapped for knowledge reuse but changing any block position is prohibited \cite{Reassembly_2022}.
Given the possibility of block reuse via swapping, multiple knowledge-reuse models, which differ from the original task-specific model, can be found to meet the performance requirement of a specific task.
Targeting an arbitrary device executing a specific task, say device $k$, the finalized model used for on-device inference can be represented by a block-index sequence denoted as $\Tilde{\mathcal{S}}_k^{\star}$.
Given a set of task requests, the \emph{parameter-selection} algorithms, which are designed in the following sections, aim to find a set of knowledge-reuse models that can meet the task's QoS requirements and at the same time maximize their shared blocks to facilitating broadcasting.

Next, to enable parameter selection, we introduce a \emph{reusability score} to every block that measures its relevance to a given task.
The measure is based on the Shapley value which is a notion originally from game theory for accessing the credit of introducing a component in a coalition game and is now widely used to interpret the functionality of deep learning models \cite{Explain_Shapley_2019, Neuron_Shapley_2020, Shapley_Value_2020}.
In the current context, the Shapley value (reusability score) of model block $s_{i,n}$ for the task of device $k$ is equal to the average marginal contribution the block makes over all possible assembled models with performance assessed for the desired task of device $k$.
Let $I_{i, n}^{(k)}$ denote the reusability score of block $s_{i,n}$ \emph{with respect to} (w.r.t.) the task of device $k$.
Consider a knowledge-reuse model comprising the block sequence, e.g., $\mathcal{S}_i$.
Then the model's reusability score for device $k$ can be written as $I_i^{(k)} = \sum_{n=1}^{N} I_{i, n}^{(k)}$.

Prior to the deployment of the MBA service, the Shapley values of model blocks for disparate intelligent tasks, i.e., $\{I_{i,n}^{(k)}, \forall i, n, k\}$ are evaluated based on the algorithm proposed in \cite{Neuron_Shapley_2020} and recorded in the AI model library.
Based on the above discussion, we can readily define the mentioned AI library as an entity comprising: 1) a set of available model blocks; 2) a table of reusability score mapping the blocks to tasks; 3) a parameter-selection algorithm for assembling knowledge-reusable models (see Section~\ref{sec: optimal parameter selection}).

\vspace{1mm}
\subsubsection{Parameter Selection for Broadcasting} 
To leverage the reusability of model blocks in diverse AI tasks, an edge device is expected to receive as many model blocks as possible to maximize the possibility of assembling an ideal DNN model for its on-device inference.
Therefore, the system adopts a broadcast technique to transmit all the scheduled model blocks to all served devices tentatively.
However, with more model blocks needing to be broadcasted, the communication overhead, e.g., the downloading latency and energy consumption, increases.
Hence, parameter selection is in need to pick up the necessary model blocks, i.e., minimizing the number of broadcasted model blocks, to fulfill all served devices' QoS requirements.
The criterion behind parameter selection is to select blocks with high reusability scores for most AI tasks to minimize the total number of selected blocks.
Appended with a joint power control, the broadcasting minimizes the entire model download service latency.
The detailed joint PS-PC problem is formulated in Section~\ref{sec: PSPC}.
And the optimal PS and PC algorithms are elaborated in Section~\ref{sec: optimal parameter selection} and Section~\ref{sec: optimal power control} respectively.

\vspace{1mm}
\subsubsection{On-device Model Assembling}
\label{subsec: On-device Model Assembling}

Consider an arbitrary device, say device $k$.
Given the received model block set $\Tilde{\mathcal{S}}_{k}$, the model assembling selects part of model blocks to form $\Tilde{\mathcal{S}}_k^{\star}$, targeting the maximization of the reusability score while adhering to the model architecture constraint.
This constraint ensures that there is at least one available model block in $\Tilde{\mathcal{S}}_k$ for every position index $n$, which can be mathematically written as follows:
\begin{equation}\text{(C1: Model architecture constraints) \quad}
    \sum_{i=1}^K \alpha_{i,n}^{(k)} \geq 1, \forall n, k.
\end{equation}
\label{eq: model architecture constraint}
The device is then compelled to select the most valuable model block, i.e., the one with the highest reusability score concerning the desired task, from the received model block set $\Tilde{\mathcal{S}}_k$ for every position $n$, hence yielding $\Tilde{S}_k^{\star}$.
Consequently, the assembled model score of device $k$ is evaluated through
\begin{equation}
    C(\Tilde{\mathcal{S}}_{k}) = \sum_{n=1}^N \max\limits_{i}\{\alpha_{i,n}^{(k)} I_{i,n}^{(k)}\}.
\end{equation}

\subsection{MBA Protocol}

The steps of the MBA protocol as illustrated in Fig.~\ref{fig: framework}(b) are elaborated as follows. 
\begin{itemize}
    \item[1)] \textit{Model Requesting:} 
        Each device submits to the server a request for model downloading that specifies the corresponding AI task along with the QoS requirement.
    \item[2)] \textit{Parameter Selection and Broadcasting:} 
        Given the model downloading request, the parameter-selection algorithm is deployed to select a set of model blocks from the AI library for broadcasting.
        The transmission power is adapted using the power-control algorithm jointly designed with the parameter-selection algorithm in Sections~\ref{sec: optimal parameter selection}-\ref{sec: optimal power control}.
        The broadcasting maximizes the likelihood of assembling satisfactory DNN models at devices. 
    \item[3)] \textit{On-device Model Assembling:} 
        Upon the completion of broadcasting, each device assembles an AI model from the received block set to achieve the best performance on its AI task.  
\end{itemize}

\begin{Remark}(Is Knowledge Reuse a Free Lunch ?)
\label{remark: is knowledge reuse a free lunch}
\emph{The MBA design reduces downloading overhead by knowledge reuse among different downloaded AI models without compromising the task performance.
Careful readers may cast doubt on the seemingly ``free lunch" from knowledge reuse.
To address the concern, it should be emphasized that too aggressive knowledge reuse models inevitably lead to task performance degradation.
Therefore, underpinning the effectiveness of MBA is the optimal control of the ratio of overlapping blocks among downloaded AI models.
The range of knowledge-reuse level to achieve a ``free lunch" is further studied experimentally in Section~\ref{subsec: effectiveness of knowledge reuse}.}
\end{Remark}

\section{Joint Parameter-Selection-and-Power-Control: Problem Formulation and Decoupling}
\label{sec: PSPC}
\vspace{2mm}

In this section, the joint PS-PC problem for MBA is formulated with the objective of minimizing total downloading latency.
Then, we show that the problem can be decomposed into two sequential sub-problems, which simplifies the finding of optimal algorithms.

\subsection{Problem Formulation}
To begin with, two additional design constraints are introduced besides the model architecture constraints (C1) should be satisfied. 
First, the assembled model of an arbitrary device, say device $k$, should meet its QoS requirement, i.e., the reusability score on task $k$ exceeds some fixed threshold $c_k$, which are given in \eqref{eq: QoS requirements}.
\begin{equation}\text{(C2: QoS requirements) \quad}
\label{eq: QoS requirements}
    C(\Tilde{\mathcal{S}}_{k}) \geq c_k, \quad\forall k \in \mathcal{K}.
\end{equation}
The second is the energy constraint given the budget, $E$, for broadcasting.
\begin{equation}\text{(C3: Energy constraint) \quad}
    \sum_{i=1}^{K} \sum_{n=1}^{N} T_{i,n} p_{i,n} \leq E.
\end{equation}

Given the constraints (C1)-(C3) and the objective of broadcasting latency minimization, the problem of joint PS-PC can be formulated as
\begin{equation*}\text{(P1)~~}
    \begin{aligned}
        \min \limits_{\substack{\{p_{i,n}\}, \{T_{i,n}\}, \\ \{\alpha_{i,n}^{(k)}\}} }
        & \sum_{i=1}^{K} \sum_{n=1}^N T_{i,n}, \\
        \mathrm{s.t.~~~~~} 
        & \alpha_{i,n}^{(k)} Q \leq T_{i,n} B \log_2(1+ \frac{p_{i,n}H_k}{N_0}), \forall k, i, n, \\
        & \alpha_{i,n}^{(k)} \in \{0, 1\}, \forall i, n, k, \\
        & \text{(C1) \& (C2) \& (C3).}
    \end{aligned}
\end{equation*}

\vspace{1.5mm}
The combinatorial nature of parameter (i.e., model block) selection coupled with the problem of continued power control renders the above problem challenging to solve. 
Nevertheless, a tractable solution for problem (P1) can be found by decoupling the parameter selection (PS) and power control (PC) in the sequel.

\subsection{Equivalent Problem Decoupling}

It is found that the joint PS-PC problem in (P1) can be decomposed into two sequential sub-problems that decouple PS and PC.
First, the PS sub-problem is transformed from (P1).
The objective of the PS sub-problem is the number of selected blocks with relevant constraints extracted from problem (P1), which can be written as
\begin{equation*} \text{(P2: PS) \qquad}
    \begin{aligned}
        \min \limits_{\{\alpha_{i,n}^{(k)}\}} 
        & \quad \sum_{i=1}^{K} \sum_{n=1}^N \max_{k} \alpha_{i,n}^{(k)}, \\
        \mathrm{s.t.}~  & \quad \sum_{i=1}^K \alpha_{i,n}^{(k)} \geq 1, \forall n, k, \\
        & \quad \sum_{n=1}^N \max\limits_{i}\{\alpha_{i,n}^{(k)} I_{i,n}^{(k)}\} \geq c_k, \forall k, \qquad \qquad \qquad \quad \\
        & \quad \alpha_{i, n}^{(k)} \in \{0, 1\}, \forall i, n, k.
    \end{aligned}
\end{equation*}
Given the optimal model block indicator $\{{\alpha_{i,n}^{(k)}}^*\}$ from solving problem (P2), the PC sub-problem is reduced from (P1) as

\begin{equation*} \text{(P3: PC)~~} 
    \begin{aligned}
        \min \limits_{\substack{\{p_{i,n}\} \\ \{T_{i,n}\}}}
        &  \sum_{i=1}^{K} \sum_{n=1}^N T_{i,n} \\
        \mathrm{s.t.~}
        &  ~ {\alpha_{i,n}^{(k)}}^{*} Q \leq T_{i,n} {B \log_2(1+\frac{p_{i,n}H_k}{N_0})}, \forall k, i,n, \\
        &  ~ \sum_{i=1}^{K} \sum_{n=1}^{N} T_{i,n} p_{i,n} \leq E.
    \end{aligned}
\end{equation*}

\noindent The optimality of the above decoupling is shown in the following lemma.

\begin{Lemma}
\label{lemma: equivalent decoupling}
    \emph{Solving problem (P1) is equivalent to solving problem (P2) followed by problem (P3).
}
\end{Lemma}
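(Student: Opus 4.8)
The plan is to show the equivalence by a two-way argument: first that the optimal value of (P1) is lower-bounded by the composition "solve (P2), then (P3)", and second that this composition is feasible for (P1) and achieves the same objective. The key structural observation is that in (P1) the block indicators $\{\alpha_{i,n}^{(k)}\}$ appear in two decoupled ways: they enter the rate/latency constraints and the energy constraint only through which blocks must be transmitted and to whom, while the objective $\sum_{i,n} T_{i,n}$ does not depend on $\{\alpha_{i,n}^{(k)}\}$ directly. So I would argue as follows.

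First I would fix any feasible $\{\alpha_{i,n}^{(k)}\}$ for (P1) (equivalently, for (P2), since (C1) and (C2) are exactly the constraints of (P2)). For this fixed selection, the residual problem over $\{p_{i,n}\},\{T_{i,n}\}$ is precisely (P3). A block $s_{i,n}$ with $\max_k \alpha_{i,n}^{(k)} = 0$ is not requested by anyone, so its constraint $\alpha_{i,n}^{(k)} Q \le T_{i,n} B\log_2(1+p_{i,n}H_k/N_0)$ is vacuous for all $k$; setting $p_{i,n}=T_{i,n}=0$ is optimal for such blocks and contributes nothing to either the objective or the energy budget. Hence the value of (P1) equals $\min_{\{\alpha\}\text{ feasible}} V_{\mathrm{PC}}(\{\alpha\})$, where $V_{\mathrm{PC}}(\{\alpha\})$ is the optimal value of the PC problem given selection $\{\alpha\}$. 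The crux is then to show that $V_{\mathrm{PC}}(\{\alpha\})$ is monotone in the selection in the right sense — specifically, that it depends on $\{\alpha\}$ only through the "requested" pattern and that a selection minimizing the block count $\sum_{i,n}\max_k\alpha_{i,n}^{(k)}$ (i.e., an optimum of (P2)) also minimizes $V_{\mathrm{PC}}$.

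The main obstacle is exactly this monotonicity step: a priori, two selections with the same number of broadcast blocks could yield different PC values, because the latency $T_{i,n}$ of a broadcast block is governed by its worst-channel requesting device, so adding a poor-channel device to a block's request set can inflate that block's latency and energy cost. I would resolve this by invoking the forthcoming closed-form solution of (P3) (Section~\ref{sec: optimal power control}): since (P3) is shown there to reduce to an energy-efficient-transmission problem whose optimal value is separable across broadcast blocks and, for each block, increasing in the bit load and decreasing in the effective channel gain, one can argue that an optimal (P2) solution can be taken to request each block only from devices that genuinely need it under (C1)–(C2), and that reducing the set of broadcast blocks can only reduce the per-block latencies and relax the shared energy budget. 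A cleaner route, which I would present as the primary argument, is to note that (P2) and (P3) are genuinely independent of one another — (P2)'s decision variables and constraints do not involve $\{p_{i,n}\},\{T_{i,n}\}$ at all, and (P3)'s feasibility for a given $\{\alpha\}$ depends only on whether $\{\alpha\}$ is itself feasible, not on its objective value — so the joint minimization over $(\{\alpha\},\{p\},\{T\})$ literally factors as $\min_{\{\alpha\}}\big[\,[\![\{\alpha\}\text{ feasible for (P2)}]\!]\cdot\min_{\{p\},\{T\}} \sum T_{i,n}\big]$; the role of the (P2) objective is only to tie-break among feasible selections toward the one the downstream PC problem is subsequently solved for.

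Finally I would close the loop in the reverse direction: given an optimizer $\{\alpha^*\}$ of (P2) and then an optimizer $(\{p^*\},\{T^*\})$ of (P3) under $\{\alpha^*\}$, the triple $(\{\alpha^*\},\{p^*\},\{T^*\})$ satisfies all constraints of (P1) by construction — (C1) and the QoS constraint come from (P2), the rate and energy constraints from (P3), and the integrality of $\alpha^*$ is inherited — so it is feasible for (P1) with objective $\sum T^*_{i,n} = V_{\mathrm{PC}}(\{\alpha^*\})$, which by the first half equals the optimal value of (P1). Hence the sequential procedure attains the optimum of (P1), establishing the claimed equivalence.
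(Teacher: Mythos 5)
You set up the decomposition correctly: fixing a feasible selection $\{\alpha_{i,n}^{(k)}\}$ reduces (P1) to (P3), unrequested blocks can be assigned $p_{i,n}=T_{i,n}=0$ at no cost, and the reverse direction (the sequentially obtained triple is feasible for (P1) and attains its objective) is fine. You also correctly isolate the crux, namely that a selection minimizing the block count $\sum_{i,n}\max_k\alpha_{i,n}^{(k)}$ must also minimize the downstream PC value $V_{\mathrm{PC}}(\{\alpha\})$ --- this is exactly the step the paper's own proof rests on, where it asserts that total latency and energy consumption are monotone-increasing in the number of selected blocks. The problem is that neither of your two proposed resolutions closes this step. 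The ``merging'' argument via the closed form of (P3) only shows that dropping a redundant block or a redundant request cannot hurt; it does not show that a count-minimal selection dominates a feasible selection with \emph{more} blocks but \emph{better} effective channels $\min_{k:\alpha_{i,n}^{(k)}=1}H_k$, and the QoS constraint (C2) restricts which request sets are admissible, so one cannot freely merge an arbitrary feasible selection down to the count-minimal one. Your ``cleaner route'' is not a repair but a retreat: the inner PC minimum is \emph{not} independent of $\{\alpha\}$ (the selection enters the rate constraints through the worst requesting channel of each block), so the claimed factorization does not hold; and declaring the (P2) objective a mere tie-break is effectively a denial of the statement being proved, since a tie-break criterion gives no guarantee that the selection it returns minimizes $V_{\mathrm{PC}}$ over all feasible selections.

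What is needed, and what the paper asserts in one sentence without detail, is a domination claim: for every feasible selection there exists a count-minimal feasible selection whose (P3) value is no larger, the intuition being that each additional broadcast block adds $Q$ bits of payload competing for the same energy budget $E$, so both the latency objective and the left-hand side of (C3) increase with the number of selected blocks. You should state this monotonicity explicitly as the key lemma and argue it (handling the interaction between the per-block effective channels and the shared energy budget), rather than leaving it at the two sketches you give; as written, the equivalence is not established.
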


\begin{proof}
First of all, the feasibility of problem (P1) implies the sufficiency of the energy budget.
Obviously, the total downloading latency and energy consumption are both monotone-increasing functions of the number of model blocks selected for broadcasting.
Then solving (P1) necessitates its minimization, which corresponds to problem (P2).
Conditioned on the resultant selected blocks, the optimal power control policy is reduced to the traditional problem of broadcasting latency minimization (see, e.g., \cite{power_control}), which corresponds to problem (P3).
This completes the proof.
\end{proof}

% Considering the downloading service latency and energy consumption, they are both non-decreasing functions of the model block assignment indicator $\alpha_{i,n}^{(k)}$. 
% Specifically, they are both monotonically increasing with the number of model blocks indeed broadcast, i.e., the objective of Problem (P2).
% Given the on-device model assembling constraint on model architecture (C1), the model blocks successfully delivered to an arbitrary device should be at least $N$ blocks with position indices ranging from $1$ to $N$. 
% If a model block with position index $n$ is broadcast but not successfully received by device-$k$, this device must capture another model block with the same position index $n$ to fulfill this constraint, which incurs extra overhead on downloading latency and energy consumption under the static channel setup. 
% The same applies to constraint (C2).
% Furthermore, the transmit power optimization is essentially determined by the successful transmission indicator ${\alpha_{i,n}^{(k)}}$, hence the power control can be appended subsequently. 

\section{Optimal Parameter Selection for MBA}
\label{sec: optimal parameter selection}

For the parameter selection, we propose two algorithms in this section, based on greedy selection and branch-and-bound (B\&B) method, to tackle mixed integer programming.

The $\max(\cdot)$ operation brought by the model assembling operation in problem (P2) makes the problem intractable. 
Therefore, we aim to simplify the reusability score constraints for the explicit analysis on model block parameter selection. 
Assuming the optimal solution towards problem (P2) is $\{{\alpha_{i,n}^{(k)}}^{*}\}$ and there exist multiple model blocks at position $n$ for device $k$ in the obtained solution, e.g., ${\alpha_{i_1, n}^{(k)}}^{\star} = {\alpha_{i_2, n}^{(k)}}^{\star} = \cdots = {\alpha_{i_k, n}^{(k)}}^{\star} = 1$. 
The solution indicates that the received model block exactly selected for model assembling, e.g., block $s_{i_1, n}$, included in $\Tilde{\mathcal{S}}_k^{\star}$, makes the assembled model fulfill the reusability score constraint (C2) as well as the architecture constraint (C1).
It is obvious that dropping the other model blocks at device $k$, i.e., setting ${\alpha_{i_2, n}^{(k)}}^* = \cdots = {\alpha_{i_k, n}^{(k)}}^* = 0$, will not increase the number of blocks for broadcasting and the reusability score constraints and the architecture constraint also hold.
Therefore, the $\max{} (\cdot)$ operations in model assembling can be discarded by compelling each device to reserve only one model block for each position during server parameter selection, which is sufficient to offer an optimal solution to the problem (P2).
Hence, problem (P2) can be transformed into the following problem (P2.1) equivalently, that is

\begin{equation*} \text{(P2.1)}
    \begin{aligned}
        \quad \min \limits_{\{\alpha_{i,n}^{(k)}\}, \{\alpha_{i,n}\} }
        & \quad \sum_{i=1}^{K} \sum_{n=1}^N  \alpha_{i,n}, \\
        \mathrm{s.t.~~} \quad
        & \quad \sum_{i=1}^K \alpha_{i,n}^{(k)} = 1, \forall n, k, \\
        & \quad \sum_{n=1}^N \sum_{i=1}^{K} \alpha_{i,n}^{(k)} I_{i,n}^{(k)} \geq c_k, \forall k, \qquad \qquad \qquad \quad \\
        & \quad \alpha_{i,n}^{(k)} \leq \alpha_{i,n}, \forall k, i, n, \\
        & \quad \alpha_{i, n}^{(k)} \in \{0, 1\}, \forall i, n, k, \\
        & \quad \alpha_{i, n} \in \{0, 1\}, \forall i, n.
    \end{aligned}
\end{equation*}

However, the simplified problem (P2.1) is a combinatorial optimization problem and remains complicated to solve as shown below.
\begin{proposition} 
    \emph{Problem (P2.1) is NP-complete.}
    \label{proposition: np-complete}
\end{proposition}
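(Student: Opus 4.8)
The plan is to establish NP-completeness in two parts: (i) membership in NP, and (ii) NP-hardness via reduction from a known NP-complete problem. Membership in NP is routine: given a candidate assignment of the binary variables $\{\alpha_{i,n}^{(k)}\}$ and $\{\alpha_{i,n}\}$, one can verify in polynomial time that the $K\cdot N$ equality constraints $\sum_i \alpha_{i,n}^{(k)}=1$ hold, that the $K$ reusability-score inequalities $\sum_n\sum_i \alpha_{i,n}^{(k)} I_{i,n}^{(k)}\ge c_k$ hold, that the coupling constraints $\alpha_{i,n}^{(k)}\le\alpha_{i,n}$ hold, and that the objective $\sum_{i,n}\alpha_{i,n}$ does not exceed a given budget. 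So the decision version is in NP.

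For NP-hardness, I would reduce from a covering-type problem — the natural candidate is \textbf{Set Cover} (or equivalently Vertex Cover / Minimum Hitting Set), all of which are classically NP-complete. The intuition is that choosing the "shared" blocks $\{\alpha_{i,n}=1\}$ to broadcast, while each device $k$ must pick exactly one admissible block per position with total score at least $c_k$, is exactly a covering problem: the devices play the role of elements to be covered, and each choice of a broadcast block (a model index $i$ at a position $n$) plays the role of a set. Concretely, given a Set Cover instance with universe $\mathcal{U}$ and family $\mathcal{F}=\{F_1,\dots,F_m\}$, I would build an MBA instance with $N=|\mathcal{U}|$ positions (one per element $u$), treat the sets $F_j$ as the available model indices $i$, arrange the reusability scores $I_{i,n}^{(k)}$ so that device $k$ (associated with element $u$) can satisfy its QoS threshold $c_k$ at position $n=u$ only by selecting some $\alpha_{F_j,u}^{(k)}=1$ with $u\in F_j$, and set all other entries so that off-target selections are forced to contribute below threshold. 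A padding/dummy construction (e.g., extra positions with a unique universally-usable block of score large enough to fill the QoS gap, or per-device threshold tuning) handles the architecture constraint (C1) for positions not associated with that device's element. Then a broadcast set of size $\le t+c_0$ (where $c_0$ accounts for forced dummy blocks) exists if and only if the Set Cover instance has a cover of size $\le t$.

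The reduction is clearly polynomial: the number of variables and constraints is polynomial in $|\mathcal{U}|+|\mathcal{F}|$. The two directions of correctness are: a cover of size $t$ yields a feasible MBA solution by setting $\alpha_{F_j,n}=1$ for the chosen sets and routing each device to a covering set at its element's position (feasibility of (C1), (C2) and the coupling constraint follows by the score design); conversely, any feasible MBA solution with small objective induces, by reading off which $\alpha_{i,n}=1$ at the element-positions, a set cover of the same (or smaller) size, because the QoS constraint forces each device to have been served by a genuinely covering set.

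The main obstacle I anticipate is engineering the score table $\{I_{i,n}^{(k)}\}$ and the thresholds $\{c_k\}$ so that the equality constraint $\sum_i\alpha_{i,n}^{(k)}=1$ (exactly one block per position per device) does not give a device a "free" way to meet its QoS threshold without using a covering block — i.e., making the off-target blocks at non-element positions contribute just enough for (C1) but strictly too little, in aggregate, to reach $c_k$ without a covering choice. This requires a careful accounting argument: summing the best-possible off-target contributions over all positions must fall short of $c_k$ by a margin that only a covering-block selection at the element position can close. Once that gap budget is fixed correctly, the equivalence between MBA solution size and set-cover size follows by a direct counting argument, and the \textbf{if and only if} is complete.
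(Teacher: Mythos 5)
Your high-level strategy (reduction from a covering problem) is legitimate and different from the paper's, which instead exhibits the multiple-choice max-min knapsack problem as a special case of (P2.1) by collapsing to $N=1$ and passing to a sequence of feasibility questions in the block budget $Z$. However, your specific construction has a genuine flaw that is more fundamental than the score-engineering obstacle you flag at the end. In (P2.1) the broadcast objective $\sum_{i}\sum_{n}\alpha_{i,n}$ counts \emph{position-indexed} blocks: $(F_j,u_1)$ and $(F_j,u_2)$ are two distinct broadcast items even though they carry the same model index $F_j$. Because you place each element $u$ at its own position $n=u$ and force device $u$ to take its covering block at that position, a set $F_j$ covering several elements cannot be "shared" — each covered element consumes a different $(i,n)$ pair. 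Consequently every feasible solution of your instance broadcasts exactly $|\mathcal{U}|$ covering blocks plus a fixed number of dummies, so the MBA optimum is a constant independent of the size of the minimum set cover, and the claimed equivalence "broadcast set of size $\le t+c_0$ iff cover of size $\le t$" cannot hold. The coupling constraint $\alpha_{i,n}^{(k)}\le\alpha_{i,n}$ only lets devices share a block when they select the \emph{same} $(i,n)$, which your element-indexed positions preclude by design.

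The repair is to make all the covering action happen at a single shared position: take $N=1$, let the model indices $i$ range over the sets $F_j$, set $I_{F_j,1}^{(u)}=1$ if $u\in F_j$ and $0$ otherwise, and set $c_u=1$. Then the architecture constraint forces each device (element) to select exactly one set containing it, and $\sum_i\alpha_{i,1}$ is precisely the number of distinct sets used, so minimizing it is exactly minimum set cover; no dummy padding or gap-budget accounting is needed. (It is no accident that the paper's own argument also sets $N=1$ to land exactly on its source problem — the multi-position structure of (P2.1) is what makes it a generalization, not what carries the hardness.) Your NP-membership argument for the decision version is correct and is in fact more careful than the paper, which omits it.
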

\noindent The proof is given in Appendix \ref{appendix: A}.
\vspace{0.3cm}

The traditional approach is to approximately solve problem (P2.1) using the integer relaxation method that, however, may not yield a feasible solution for the current case \cite{Relaxation}.
Specifically, if the said method is adopted, the binary block selection indicators $\{\alpha_{i,n}^{(k)}\}$ and $\{\alpha_{i,n}\}$ can be relaxed as $\alpha_{i,n}^{(k)} \in [0, 1]$ and $\alpha_{i,n} \in [0, 1]$.
Then problem (P2.1) is reduced to a linear programming optimization, which can be solved efficiently by the existing optimizers, e.g., \cite{cvx}.
After obtaining the indicators $\alpha_{i,n}^{(k)} \in [0, 1]$, the results are rounded to binary integers $\{0, 1\}$.
Though linear relaxation can yield a feasible solution to the relaxed problem, the rounding operation may make the block selection indicators unable to satisfy model architecture and reusability score constraints. 
To address this issue, we propose a new solution approach as presented in the sub-sections.

\vspace{-2mm}
\subsection{Greedy Parameter Selection}
\label{subsec: greedy ps}

The objective of problem (P2.1) is to minimize the cardinality of the union of the model block sets received by all devices. 
This requires finding the combination of model block sets for individual tasks to have the highest overlapping ratios, which is NP-complete as proved in Proposition~\ref{proposition: np-complete}. Alternatively, we propose a low-complexity scheme of greedy parameter selection that can achieve close-to-optimal performance. 
The scheme, which is summarized in Algorithm \ref{algorithm: greedy model parameter selection}, consists of two steps that are elaborated as follows.

\subsubsection{Construction of Candidate Model Sets} 
A dedicated candidate model set is generated for each task/device. 
Then a model request can be met by choosing from the associated candidate model set a model that not only satisfies the corresponding QoS requirement in (C2) but also supplies model blocks useful for other tasks. 
Therefore, the $K$ candidate model sets should share as many blocks as possible so as to maximize the effectiveness of broadcasting. 
To acquire this property, we design a construction method following a similar procedure as the training of muNet, a highly-compact multi-task model developed by Google, where model blocks for one task are cloned for another one sequentially \cite{muNet_2022}.
Without loss of generality, consider the construction of the candidate model set for device/task $k$, which is denoted as $\mathcal{C}_k$. 
To begin with, the first element (model) of $\mathcal{C}_k$, denoted as $\mathcal{C}_{k, 1}$, is set as the task-specific model optimized for task $k$, i.e., $\mathcal{C}_{k,1} = \mathcal{S}_k = \{s_{k,1}, \cdots, s_{k, N}\}$. 
Using $\mathcal{C}_{k, 1}$ as a seed, other elements of $\mathcal{C}_k$ are constructed sequentially as follows. 
Consider the construction of $\mathcal{C}_{k, 2}$. 
It is derived from $\mathcal{C}_{k, 1}$ by replacing only \emph{one} block, the one with the highest reusability score w.r.t. task $k$. 
The replacement block is chosen from all blocks with the same position index constituting all other task-specific models: $\{\mathcal{S}_i \mid i \neq k\}$ using the criterion of highest reusability score w.r.t. task $k$. 
Next, consider the construction of an arbitrary element of $\mathcal{C}_{k}$, say $\mathcal{C}_{k, n^{\prime}}$. 
Similarly as the preceding case, $\mathcal{C}_{k, n^{\prime}}$ is derived from $\mathcal{C}_{k, n^{\prime}-1}$ by replacing one of its blocks, which is identical to the block in the seed $\mathcal{C}_{k, 1}$ that has the $n^{\prime}$-th highest reusability score w.r.t. task $k$. 
Following the above sequential procedure, we can construct the model set $\mathcal{C}_k = \{\mathcal{C}_{k,1} \cdots, \mathcal{C}_{k, |\mathcal{C}_k|} \}$. 
Upon the construction of each model, we check if it can satisfy the QoS requirement of task $k$. 
If the result is negative, the model is discarded and the construction procedure is terminated, determining the cardinality $|\mathcal{C}_k|$.
Note that $|\mathcal{C}_k|$ cannot be larger than $N + 1$.
Other candidate model sets can be constructed in the same way, yielding the total of $K$ candidate model sets $\mathcal{C}_1, \mathcal{C}_2, \cdots, \mathcal{C}_K$.

\begin{algorithm}[t] \caption{Greedy Model Parameter Selection}
\label{algorithm: greedy model parameter selection}
    \textbf{Input:} The task index set $ \mathcal{K} = \{1, 2, \cdots, K\} $, all model sequences $\{\mathcal{S}_{i}\}$ with their task-specific blockwise reusability score $\{I_{i,n}^{(k)}\}$, the QoS score requirements $\{c_k\}$; \\
    \textbf{Candidate listing:} \\
    $\mathcal{C}_1 = \cdots = \mathcal{C}_K = \emptyset$;\\
    \textbf{for} $k \in\{1, \cdots, K\}$ \textbf{do} \\
    \qquad $\mathcal{C}_k = \{ \mathcal{S}_{k} \} $, $\mathcal{C}_{k, 1} = \mathcal{S}_{k} $,\\ 
    \qquad Sort $\mathcal{C}_{k, 1}$ according to $\{I_{k,n}^{(k)}\}$ in descending order; \\
    \qquad \textbf{for} $n^{\prime} \in\{1, \cdots, K\}$ \textbf{do} \\
    \qquad \qquad \textbf{find} replacement block $\hat{s}_{k, n^{\prime}}$,  \\
    \qquad \qquad $\mathcal{C}_{k, n^{\prime}+1} = \mathcal{C}_{k, n^{\prime}} \backslash \{s_{k, n^{\prime}}\} \cup \{\hat{s}_{k, n^{\prime}}\}$; \\
    \qquad \qquad \textbf{if} $\mathcal{C}_{k, n^{\prime}}$ meets QoS $c_k$:\\
    \qquad \qquad \qquad $\mathcal{C}_k = \mathcal{C}_k \cup \{\mathcal{C}_{k, n^{\prime}}\}$; \\
    \qquad \qquad \textbf{else} \\
    \qquad \qquad \qquad \textbf{break}; \\
    \qquad \textbf{end for} \\
    \textbf{end for} \\
    \textbf{Model selection:} \\
    $\tilde{\mathcal{C}}_0 = \emptyset$; \\
    \textbf{for} $k \in\{1, \cdots, K\}$ \textbf{do} \\
    \qquad \textbf{find} $\mathcal{C}_{k, n^{\star}}$ based on  $\tilde{\mathcal{C}}_{k-1}$ using \eqref{eq: greedy selection metric}, \\
    \qquad $\tilde{\mathcal{C}}_{k} = \tilde{\mathcal{C}}_{k-1} \cup \mathcal{C}_{k,n^{\star}} $;\\
    \textbf{end for} \\
    \textbf{return} $\tilde{\mathcal{C}}_K$.
\end{algorithm}

\subsubsection{Greedy  Parameter Selection} 
Given the candidate model sets $\mathcal{C}_1, \cdots, \mathcal{C}_K$ and $K$ tasks, the server sequentially selects one model from each set to have $K$ selected models in total using a greedy-selection metric designed in the sequel. 
The criterion for designing the metric is to maximize the level of parametric overlapping between the selected models so as to minimize the number of broadcast blocks. 
This is equivalent to minimizing the cardinality of the union of the sets of blocks constituting the selected models. 
Consider the selection of the $k$-th model from $\mathcal{C}_k$. 
Let $\tilde{\mathcal{C}}_{k-1}$ denote the union of the sets of blocks constituting the model selected up to the $k$-th step, namely $\mathcal{C}_1, \cdots, \mathcal{C}_{k-1}$. 
Two useful metrics are defined as follows. 
The first metric, called \emph{distance} and denoted as $ d(\mathcal{C}_{k, n^{\prime}}, \tilde{\mathcal{C}}_{k-1})$, measures the number of new blocks w.r.t. those in $\tilde{\mathcal{C}}_{k-1}$ if $\mathcal{C}_{k,n^{\prime}}$ is selected. Mathematically, $d(\mathcal{C}_{k, n^{\prime}}, \tilde{\mathcal{C}}_{k-1}) = |\mathcal{C}_{k,n^{\prime}} \cup \tilde{\mathcal{C}}_{k-1} | - |\tilde{\mathcal{C}}_{k-1}|$. 
The second metric, called \emph{reappearance} and denoted as $u(\mathcal{C}_{k,n^{\prime}})$, reflects the overlapping level (in terms of identical blocks) between $\mathcal{C}_{k,n^{\prime}}$ and those in the remaining candidate model sets to be processed in subsequent steps. 
% \hl{xxx: is "existence" a term from CS? It seems not very informative.}
Mathematically, $u(\mathcal{C}_{k,n^{\prime}}) = \sum_{\ell={k+1}}^{K} \sum_{n^{\prime}=1}^{|\mathcal{C}_{\ell}|} |\mathcal{C}_{k,n^{\prime}} \cap \mathcal{C}_{\ell, n^{\prime}}|$. 
To meet the said design criterion, it is desirable to minimize the distance metric but maximize the reappearance metric. 
Therefore, the greedy-selection metric can be suitably defined as their ratio. 
Let $\mathcal{C}_{k, n^{\prime}_k}$ denote the model selected from the candidate set $\mathcal{C}_{k}$ with $n_k^{\prime}$ denoting its index. 
Then the $K$ selected models can be represented by the index sequence, $n_1^{\prime}, n_2^{\prime}, \cdots, n_K^{\prime}$. 
The scheme of greedy model selection is to compute the indices sequentially as follows: 
\begin{equation}
n^{\star} = \arg\min_{n^{\prime}}\ \frac{d(\mathcal{C}_{k, n^{\prime}}, \tilde{\mathcal{C}}_{k-1})}{u(\mathcal{C}_{k,n^{\prime}})}.    
\label{eq: greedy selection metric}
\end{equation}

\subsection{Optimal Parameter Selection}
We also design an efficient tree-search algorithm to optimally solve the mixed-integer problem in (P2.1). Its low complexity is achieved in two aspects. 

First, a compact binary tree is constructed as follows. 
Each node is associated with one of the binary variables $\{\alpha_{i,n}^{(k)}, \forall i, n, k \}$ which are defined earlier to be the successful delivery indicators of every model block. 
As a result, the tree stacks $K \times N \times K$ tiers associated with the equal number of variables ranging from $\alpha_{0,0}^{(0)}$ to $\alpha_{K, N}^{(K)}$ as illustrated in Fig.~\ref{fig: bb_tree}.
Furthermore, each node is also associated with a restricted, integer-relaxed version of problem (P2.1) w.r.t. to a node $N_t$, denoted as $\mathcal{P}_{N_t}$, which helps the efficient tree search in the sequel.
In $\mathcal{P}_{N_t}$, those binary variables associated with nodes in the current tier, namely the tier comprising the currently visited node $N_{t}$, as well as those in the tiers above are fixed while the variables in the tiers below are relaxed into continuous ones in $[0, 1]$. 
Recall that the architecture constraints in problem (P2.1) require that there should be only one model block in a specific position for each device.
% \hl{only one model block at every position index for all devices. xxxx: The highlighted statement is confusing. Rephrase.}  
Then we can enforce the constraints to aggressively prune tree nodes, resulting in a sparse tree accelerating the search. 
The pruning is illustrated in Fig.~\ref{fig: bb_tree}.

\begin{figure}[t]
  \centering
  \includegraphics[width=0.486\textwidth]{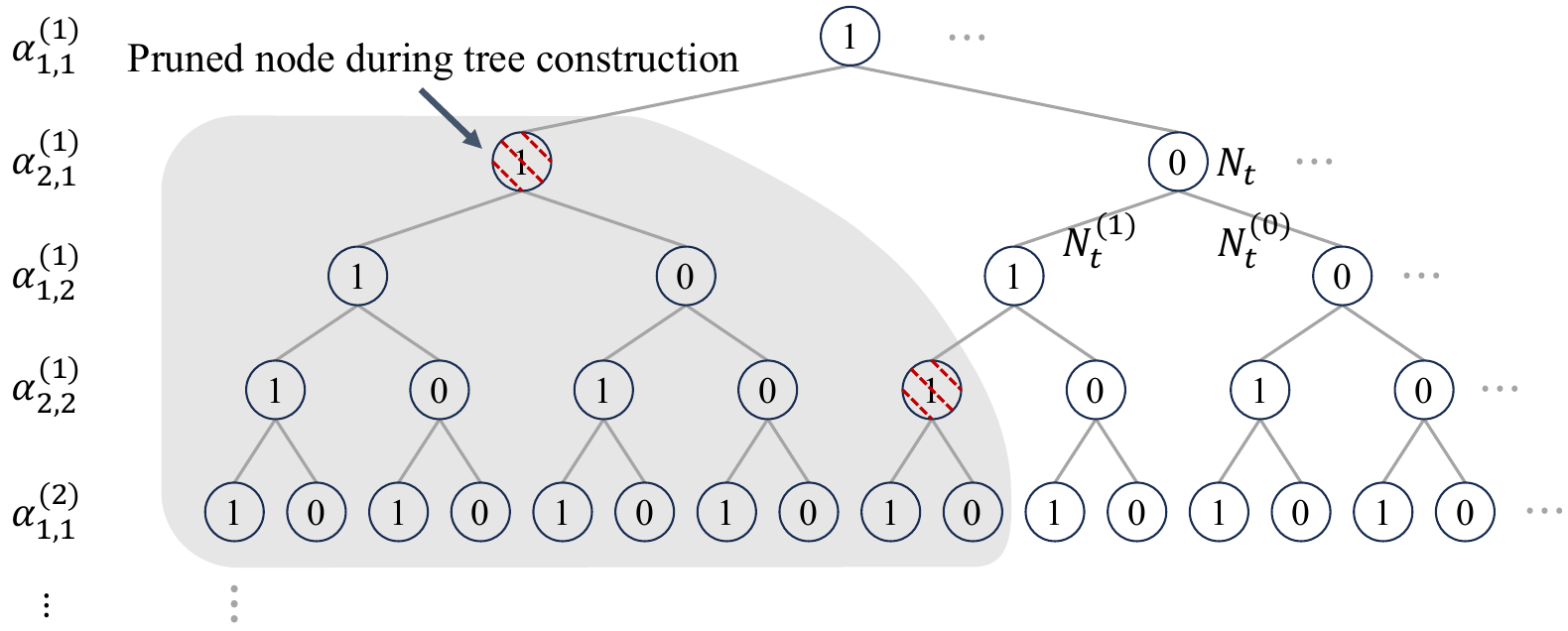}
  \caption{An example of iterative search tree for $K = N = 2$ used by Algorithm~\ref{algorithm: B&B model parameter selection}. Pruned nodes during tree construction that violate the model architecture constraints are boxed.}
  \label{fig: bb_tree}
  \centering
\end{figure}

Second, given the compact binary tree constructed above, we adopt the well-known B\&B method to design an algorithm for an efficient search over the tree. 
The most important operation in the algorithm is to fathom the current node $N_t$. 
This requires solving the problem $\mathcal{P}_{N_t}$ by using an existing highly efficient linear-program solver \cite{cvx}, for which the optimal objective (i.e., the minimized number of blocks) is denoted as $b_{N_t}^{\star}$. 
The fathoming has two possible outcomes. 
\begin{enumerate}
% \item \emph{Optimal solution found:} If the solution for  $\mathcal{P}_N$ are binary, it also solves the original problem (P2.1) and hence the search is successful and terminated.
\item \emph{Node pruning:}
Node $N_t$ together with its children are pruned under one of the following two conditions. 
a) The devices' QoS requirements cannot be  all met as reflected in $\mathcal{P}_{N_t}$ being infeasible, so is the original problem (P2.1).
b) Let $b^{*}$ denote the evolving optimal objective of problem (P2.1). If $b_{N_t}^{\star} \geq b^{*}$, the communication overhead does not decrease by exploring node $N_t$.
c) The elements in the solution to $\mathcal{P}_{N_t}$ are binary, it is unnecessary to explore the child node of $N_t$ since it already provides the solution to the associated relaxed problems. 
\item \emph{Continuing the search:} If the conditions of the preceding outcome do not hold, the unvisited node list, denoted as $\mathcal{N}$, is updated by including two child nodes of $N_t$, $N_t^{(0)}$ and $N_t^{(1)}$ into $\mathcal{N}$. 
Then the search continues to fathom the next node in $\mathcal{N}$.
\end{enumerate}
Given the above fathoming operation, the designed B\&B algorithm can be summarized as follows. 
To begin with, the unvisited node list is initialized to have only the root node of the tree. 
Then the server pops one node from the list to fathom. 
The above procedure is repeated until the unvisited node list becomes empty. 
Upon the completion of the B\&B algorithm, the optimal solution to problem (P2.1) is found with the optimal objective of $b^{\star}$. 
The success of the optimal parameter selection is guaranteed since the search checks all the valid combinations of model blocks. 
The detailed steps of the algorithm are summarized in Algorithm \ref{algorithm: B&B model parameter selection}.

A naive design of B\&B algorithm could have the worst-case complexity of $\mathcal{O}(2^{KNK})$. 
The proposed construction of a compact tree reduces the worst-case complexity to $\mathcal{O}(K^{NK})$.
However, the reduced complexity can be prohibitive when the number of devices becomes large or the AI model backbone scales up. 
In such cases, the low-complexity scheme of greedy parameter selection in Section~\ref{subsec: greedy ps} provides a practical solution that can achieve comparable performance as the B\&B counterpart as shown by experiments in Section~\ref{subsec: parameter selection for MBA}.

\begin{algorithm}[t] \caption{Optimal Parameter Selection for MBA}
\label{algorithm: B&B model parameter selection}
    \textbf{Input:} The task index set $ \mathcal{K} = \{1, 2, \cdots, K\} $, all model sequences $\{\mathcal{S}_{i}\}$ with their task-specific blockwise reusability score $\{I_{i,n}^{(k)}\}$, the QoS score requirements $\{c_k\}$; \\
    \textbf{Initialize}: $t = 0$, \\
    \qquad Node list: $\mathcal{N} = \{N_1\}$, $b^{\star} = \infty$; \\
    \textbf{while} $\mathcal{N} \neq \emptyset$ \textbf{do} \\
    \qquad $t = t + 1$, \\
    \qquad \textbf{Node Visit:} select a node $N_t$ from $\mathcal{N}$, \\
    \qquad $\mathcal{N} = \mathcal{N}  \backslash \{N_t\}$,\\
    \qquad \textbf{if} model architecture constraints hold: \\
    \qquad \qquad \textbf{Node Assessment:} solve the node-associated \\ \qquad \qquad relaxed problem $\mathcal{P}_{N_t}$ and obtain solution \\ 
    \qquad \qquad with the objective of $b_{N_t}^{\star}$; \\    
    \qquad \qquad \textbf{Fathom Decision:} check the pruning criteria;\\
    \qquad \qquad \textbf{if} decide to preserve the node:\\
    \qquad \qquad \qquad $N_t^{(0)}$, $N_t^{(1)} \leftarrow $ two children of $N_t$,\\
    \qquad \qquad \qquad $\mathcal{N} = \mathcal{N} \cup \{N_t^{(0)}, N_t^{(1)}\}$; \\
    \qquad \qquad \textbf{end if} \\
    \qquad \qquad \textbf{if} pruning condition c) holds: \\
    \qquad \qquad \qquad $b^{*} = \min (b^{*}, b_{N_t}^{\star})$; \\
    \qquad \qquad \textbf{end if} \\
    \qquad \textbf{end if} \\
    \textbf{end while} \\
    \textbf{return} $b^*$ and associated solution
\end{algorithm}

\section{Optimal Power Control for MBA}
\label{sec: optimal power control}
In the preceding section, parameter selection for MBA is optimized by solving the sub-problem (P2). 
In this section, given optimal parameter selection, i.e., $\{{\alpha_{i,n}^{(k)}}^{\star}\}$, another sub-problem (P3) is solved to provide the optimal power-control algorithm for MBA. 

Our design approach is to leverage techniques in the rich, existing literature on power control by transforming problem (P3) to have a form similar to a traditional energy-constrained power allocation problem (see, e.g., \cite{power_control}). To this end, by solving problem (P2), the model blocks selected for a specific device, say device $k$, are specified by the indicators ${\alpha_{i,n}^{(k)}}^{\star} = 1, \forall i, n$. 
For an arbitrary model block, say $s_{i, n}$, the model block should be broadcast if there exists at least one device requesting it, which is denoted by $\alpha_{i,n}^{\star} = 1$ and otherwise $\alpha_{i,n}^{\star} = 0$.
Due to broadcasting, any other device, say device $k^{\prime}$, with better channel condition, i.e., $H_{k^{\prime}} \geq H_{k}$, can also receive this block.
Hence, the downloading latency for $s_{i,n}$ is determined by the requesting device with the worst channel state. 
Therefore, the equivalent channel state in transmitting model block $s_{i,n}$, denoted by $H_{i,n}$, is given as
\begin{equation}
    H_{i,n} = \begin{cases}
        \min \limits_{k \in \{k^{\prime} \mid \alpha_{i,n}^{(k^{\prime})} = 1\} } H_k, \quad & {\alpha_{i,n}}^{\star} = 1, \\
        0, & \text{otherwise}.
    \end{cases}
\end{equation}
Moreover, problem (P3) can be transformed as

\begin{equation*} \text{(P3.1)~~} 
    \begin{aligned}
        \min \limits_{\substack{\{T_{i,n}\}, \\ \{p_{i,n}\}}}
        & ~~ \sum_{i=1}^{K} \sum_{n=1}^N T_{i,n} \\
        \mathrm{s.t.~} 
        &  ~~ {\alpha_{i,n}}^* Q \leq T_{i,n} {B \log_2(1+\frac{p_{i,n}H_{i,n}}{N_0})},  \forall i,n, \\
        &  ~~ \sum_{i=1}^{K} \sum_{n=1}^{N} T_{i,n} p_{i,n} \leq E.
    \end{aligned}
\end{equation*}
The  equivalence between problem (P3) and problem (P3.1) can be proved by analyzing the \emph{Karush–Kuhn–Tucker} (KKT) conditions of problem (P3) after substituting the obtained $\{{\alpha_{i,n}^{(k)}}^{\star}\}$. 
Then, the optimal power control policy can be found by solving the convex problem (P3.1) using the conditions with the results shown in the following proposition. 
\begin{proposition}[Optimal power control]
    \emph{For MBA, the optimal transmit power $\{{p_{i,n}}^*\}$ that solves problem (P3.1) and the resultant downloading latency $\{{T_{i,n}}^*\}$ are given as}
    \begin{equation}
        \begin{aligned}
            & {p_{i,n}}^{*} = \frac{N_0}{H_{i,n}} \left(\exp{\left(W\left(\frac{H_{i,n}-\beta^{*}N_0}{\beta^{*}N_0 e}\right)+1\right)} - 1 \right), \\
            & {T_{i,n}}^{*} = \frac{Q\ln{2}}{B\left(W\left(\frac{H_{i,n} - \beta^{*}N_0}{\beta^{*}N_0e}\right) + 1\right)}, \\
        \end{aligned} 
    \label{eq: power control}
    \end{equation}
    \emph{where $W(\cdot)$ is the Lambert $W$ function (principal branch) and $\beta^*$ denotes the optimal Lagrange multiplier that solves the following equation:}
    
    \begin{equation}
        \frac{N_0 Q \ln{2}}{B} \sum_{i=1}^{K} \sum_{n=1}^N \frac{ \exp{\Big(W\Big(\frac{H_{i,n} - \beta^{*}N_0}{\beta^{*}N_0e}\Big) + 1\Big)} - 1}{ H_{i,n} \Big(W\Big(\frac{H_{i,n} - \beta^{*}N_0}{\beta^{*}N_0e}\Big) + 1 \Big)} = E.
    \label{eq: multiplier}
    \end{equation}

\end{proposition}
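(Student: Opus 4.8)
The plan is to reduce (P3.1) to a convex program in the powers alone, solve its KKT stationarity condition block-by-block via the Lambert $W$ function, and then pin down the multiplier from the energy budget. \textbf{Step 1 (eliminate $\{T_{i,n}\}$, establish convexity).} At an optimum every rate constraint with ${\alpha_{i,n}}^{*}=1$ is tight: the objective $\sum_{i,n}T_{i,n}$ is strictly increasing in each $T_{i,n}$, and the only other constraint containing $T_{i,n}$ (the energy budget) is also increasing in it, so any slack could be removed, strictly improving the objective; blocks with ${\alpha_{i,n}}^{*}=0$ take $p_{i,n}=T_{i,n}=0$ and drop out. Hence $T_{i,n}=\frac{Q\ln 2}{B\ln(1+p_{i,n}H_{i,n}/N_0)}$ for the remaining blocks, and substituting this into the objective and the energy constraint leaves a minimization over $\{p_{i,n}>0\}$. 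Writing $E_{i,n}=T_{i,n}p_{i,n}$, the feasible set $\{(T_{i,n},E_{i,n}): B\,T_{i,n}\log_2(1+E_{i,n}H_{i,n}/(T_{i,n}N_0))\ge Q\}$ is a superlevel set of the perspective of a concave function, hence convex; with the linear objective and linear energy constraint, the reduced problem is convex and a KKT point is globally optimal (the classical energy-efficient-transmission form, cf.\ \cite{power_control}). The energy constraint is active, since raising any $p_{i,n}$ strictly lowers the matching $T_{i,n}$.

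\textbf{Step 2 (solve the stationarity condition).} Attach a multiplier $\beta\ge 0$ to the energy constraint; the multipliers for $p_{i,n}\ge 0$ vanish because $p_{i,n}^{*}>0$. The $(i,n)$-block term of the Lagrangian is $\frac{(Q\ln 2/B)(1+\beta p_{i,n})}{\ln(1+p_{i,n}H_{i,n}/N_0)}$, and, with $c:=H_{i,n}/N_0$, setting its derivative in $p_{i,n}$ to zero gives
\[
\beta\,(1+cp_{i,n})\ln(1+cp_{i,n}) = c\,(1+\beta p_{i,n}).
\]
Substituting $v:=1+cp_{i,n}$ and then $w:=\ln v-1$ (so $v=e^{w+1}$) collapses this to $w\,e^{w}=\frac{H_{i,n}-\beta N_0}{\beta N_0 e}$, i.e.\ $w=W\!\left(\frac{H_{i,n}-\beta N_0}{\beta N_0 e}\right)$ on the principal branch. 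Since $H_{i,n}>0$ the argument exceeds $-1/e$, so $w>-1$, $w+1>0$, and the resulting $p_{i,n}^{*}>0$ (consistent with the inactive nonnegativity constraints). Back-substituting $1+cp_{i,n}=e^{w+1}$ yields the stated ${p_{i,n}}^{*}$, and $T_{i,n}^{*}=\frac{Q\ln 2}{B\ln v}=\frac{Q\ln 2}{B(w+1)}$ yields the stated ${T_{i,n}}^{*}$.

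\textbf{Step 3 (determine $\beta^*$).} Substituting ${p_{i,n}}^{*}$ and ${T_{i,n}}^{*}$ into the active budget $\sum_{i,n}T_{i,n}^{*}p_{i,n}^{*}=E$ gives precisely \eqref{eq: multiplier}. Each summand on the left is continuous and strictly decreasing in $\beta$, ranging from $+\infty$ as $\beta\to 0^{+}$ down to the per-block minimal transmission energy $QN_0\ln 2/(BH_{i,n})$ as $\beta\to\infty$; feasibility of (P3), which forces $E>\sum_{i,n}QN_0\ln 2/(BH_{i,n})$, then guarantees a unique root $\beta^{*}$. Combined with the KKT-based equivalence of (P3) and (P3.1) already noted in the text, Steps 1--3 establish the proposition.

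\textbf{Main obstacle.} The crux is the change of variables $v\mapsto w=\ln v-1$ that reduces the transcendental stationarity equation to the canonical form $we^{w}=\mathrm{const}$ so that $W$ applies, together with verifying $W(\cdot)+1>0$ so the closed form is an interior point and the $p_{i,n}\ge 0$ multipliers are inactive; proving convexity of the reduced problem (or invoking the classical result) is routine but required to upgrade the stationary point to the global optimum.
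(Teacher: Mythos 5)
Your proof is correct and follows essentially the same KKT/Lagrangian route as the paper's Appendix B, differing only in that you start from (P3.1) directly (taking the (P3)-to-(P3.1) reduction as given, which the paper's appendix spends most of its space re-deriving) and in that you actually carry out the steps the paper only asserts: the explicit reduction of the stationarity condition to $we^{w}=\mathrm{const}$, the convexity of the $(T_{i,n},T_{i,n}p_{i,n})$-parametrized problem, and the monotonicity argument giving existence and uniqueness of $\beta^{*}$. These additions make your write-up more complete than the paper's own proof without changing its substance.
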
 

\noindent The proof is provided in Appendix~\ref{appendix: B}.

\section{Experimental Results}
\label{sec: experiments}

\subsection{Experimental Setup}

\begin{enumerate}
\item \emph{AI models:}
In Section~\ref{subsec: effectiveness of knowledge reuse}, to assess the effectiveness of knowledge reuse among different AI tasks, we conduct experiments using the ImageNet-1K pre-trained ViT base model \cite{ImageNet_Russakovsky}. 
We fine-tune the model for image classification on the CIFAR-10 \cite{CIFAR_Krizhevsky} and SVHN \cite{SVHN} datasets. 
The images are resized to $224 \times 224$ to align with the ImageNet dataset, and patch embedding is performed.
We freeze a portion of the pre-trained model and sequentially fine-tune it by freezing additional two blocks from the head. 
The models are optimized using stochastic gradient descent with a momentum of $0.9$ and an initial learning rate of $0.03$.
A linear warmup and a linear learning rate scheduler are applied, and the training is performed for $5$ epochs with a fixed batch size of $256$.
In Section~\ref{subsec: parameter selection for MBA} - Section~\ref{subsec: inference performance of MBA}, to demonstrate the superiority of the proposed MBA framework, we deploy multi-task AI models with overlapping model blocks as obtained from the checkpoints released in \cite{muNet_2022}. 
The models are based on the ViT base backbone, with $16$ self-attention heads per transformer layer and $N = 24$ transitional blocks. 
In the model downloading service, the edge server provides $21$ different AI models for distinct image recognition tasks, selected from the multitask system described in \cite{muNet_2022}. 
The reusability score constraints on requested tasks are set to keep the on-device inference performance comparable with the dedicated trained models.
Each model block consists of $12,582,912$ parameters, which are compressed to $Q = 5$ Mbits after applying parameter pruning and quantization.
All AI models are fine-tuned using dual NVIDIA RTX-3090 GPUs with an AMD 5900X CPU.

\item \emph{Communication Settings:} 
The bandwidth for the broadcasting service is $B = 100$ MHz. 
The channel gains of the edge devices are assumed to follow i.i.d. Rayleigh fading with a path loss of $10^{-3}$. 
The noise spectral density is set to $N_0 = 0.5 \times 10^{-9}$ W/Hz. 
There are $|\mathcal{K}| = 21$ devices requesting AI models.
The default server energy budget for a single epoch of model downloading service is $E = 250$ J. 
The AI tasks requested by edge devices are randomly generated to be non-identical.

\item \emph{Benchmark Schemes:}
\begin{itemize}
    \item \textit{Constant-power MBA}: 
        This scheme is the proposed MBA but without power control.
        It serves the purpose of demonstrating the performance gain of joint PS-PC of the MBA framework.
        In other words, the server adopts constant transmit power.
    \item \textit{Channel-adaptive unicasting}: 
        The scheme represents the traditional approach where the server unicasts each requested AI model to the specific edge device. 
        Every device is required to download $N$ dedicated model blocks to meet the constraints of the model architecture and model score. 
        The server performs channel-adaptive transmit power control \cite{power_control} to minimize the expected duration of the entire model downloading service.
\end{itemize}
\end{enumerate}

\begin{figure}[t]
  \centering
  \includegraphics[width=0.438\textwidth]{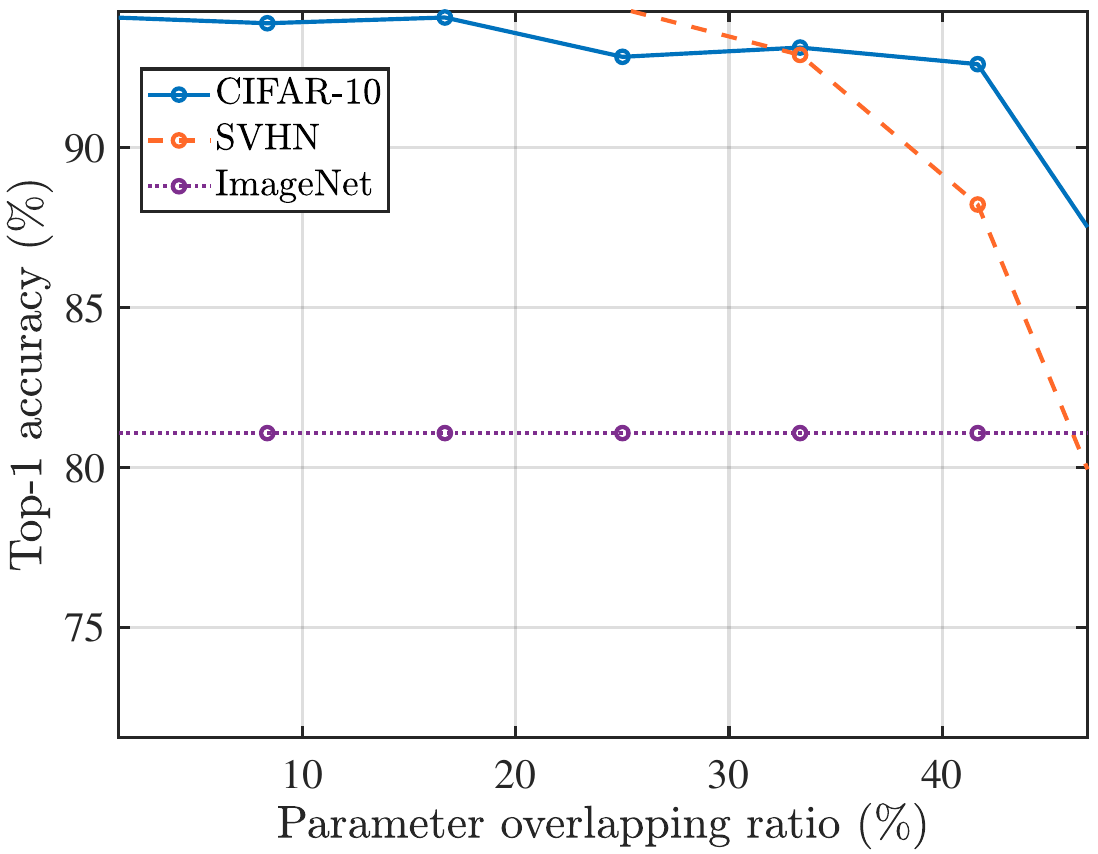}
  \caption{The effects of model parameter overlapping ratio of AI model pairs on the Top-1 inference accuracies of individual tasks as identified by their test datasets. 
  The overlapping ratio refers to that of two model pairs: 1) ImageNet and CIFAR-10; 2) ImageNet and SVHN.}
  \label{fig: acc_vs_O}
\end{figure}

\subsection{Effectiveness of Knowledge Reuse}
\label{subsec: effectiveness of knowledge reuse}

The relationship between the parameter-overlapping ratio and the Top-1 accuracy on the dataset for fine-tuning is depicted in Fig.~\ref{fig: acc_vs_O}.
The results demonstrate that if the overlapping ratio goes beyond a threshold (i.e., $30\%$), the task performance decreases both the Imagenet-CIFAR-10 and the ImageNet-SVHN pairs.
The performance loss is due to the increased number of model blocks that are frozen.
However, if the ratio is not too large (i.e., $0 - 30\%$), the performance degradation is negligible w.r.t. independently trained models (corresponding to the overlapping ratio being $0$).
This validates Remark~\ref{remark: is knowledge reuse a free lunch}.
The results show that compact multi-task AI models can be trained using advanced techniques \cite{muNet_2022, Reassembly_2022, SNNet_2023}.
They also support the principle of our design that knowledge reuse can be explored to reduce model downloading overhead without compromising task performance if the overlapping ratio is regulated.

\subsection{Parameter Selection for MBA}
\label{subsec: parameter selection for MBA}

\begin{figure}[t]
  \centering
  \includegraphics[width=0.45\textwidth]{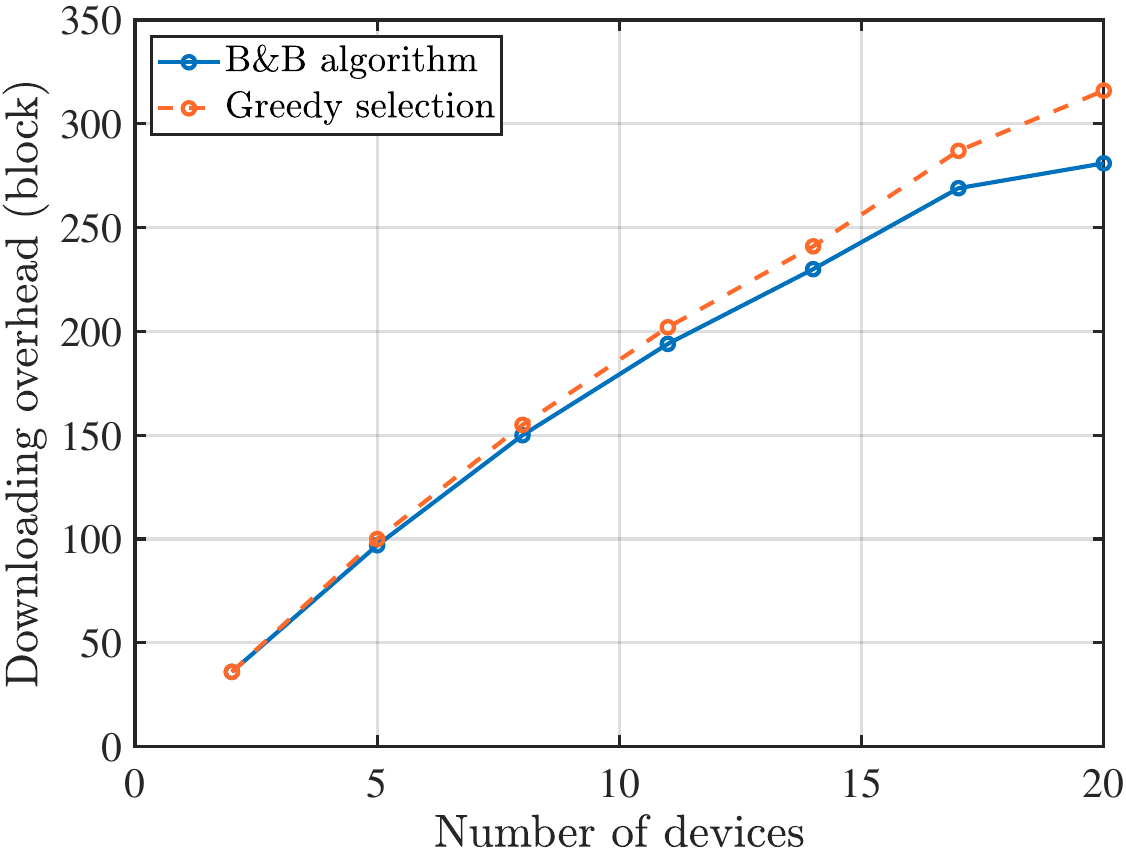}
  \caption{The comparison of downlink communication overhead for a varying number of devices between the greedy selection and branch-and-bound (B\&B) algorithm for joint PS-PC.}
  \label{fig: parameter selection}
\end{figure}

In this sub-section, we compare the performance of two joint PS-PC algorithms, namely the low-complexity algorithm of greedy parameter selection (see Algorithm \ref{algorithm: greedy model parameter selection}) and the B\&B method (see Algorithm \ref{algorithm: B&B model parameter selection}).
The metric is the number of model blocks selected for downloading to serve a given number of devices with reusability score requirements (see the objective of problem (P2.1)).
The metric quantifies the downloading overhead under users' QoS constraints.
In Fig.~\ref{fig: parameter selection}, we plot the curves of downloading overhead (i.e., the number of selected blocks for downloading) versus a varying number of devices.
One can observe that in the range of small-to-moderate number of devices ($1-8$), the greedy selection algorithm achieves identical performance as its optimal counterpart.
A performance gap starts to emerge when the number of devices exceeds this range.
The gap, however, can be observed to be small, e.g., a $5\%$ increase in downloading overhead at the number of devices equals $17$.
From the results, we can conclude that the greedy selection algorithm provides a low-complexity solution for joint PS-PC at the cost of small performance loss as opposed to the higher-complexity optimal solution.

\subsection{Latency Performance of MBA}

\begin{figure}[t]       
    \centering
    \includegraphics[width=0.44\textwidth]{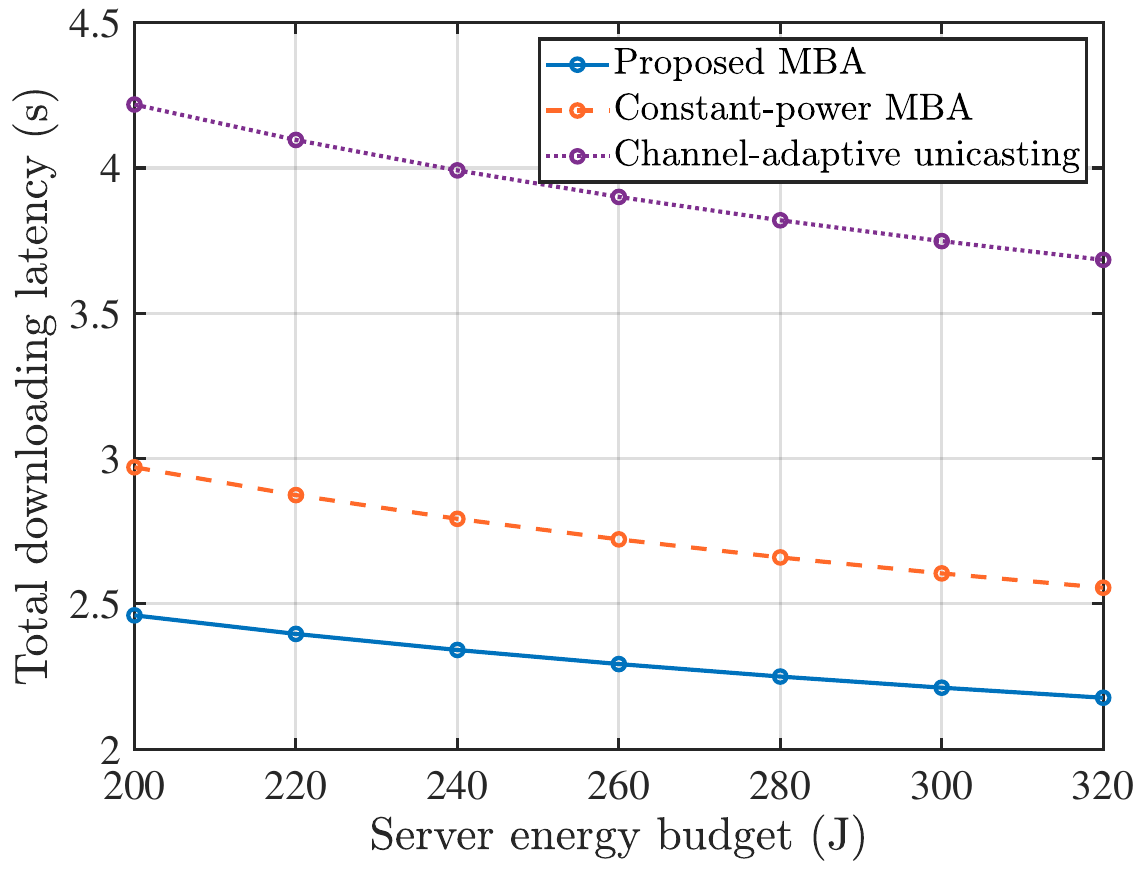}
    \caption{The effects of the server energy budget on the total downloading latency.}
    \label{fig: E_vs_T}
    \centering
\end{figure}

We consider the total downloading latency to meet devices' QoS requirements in terms of reusability score. 
Fig.~\ref{fig: E_vs_T} shows the curves of total downloading latency versus the server energy budget.
One can observe that MBA significantly reduces the latency with respect to benchmarking schemes.
In particular, MBA consistently achieves approximately $50\%$ lower latency than the traditional unicasting transmission.
On the other hand, it is found that unleashing the full potential of knowledge reusability requires dedicated power control, or otherwise, the level of latency reduction is substantially lower.
Power control is especially important for a low energy budget.
For instance, compared with MBA, its counterpart without power control increases the latency by $20\%$ and $10\%$ for the energy budget of $200$ J and $320$ J respectively.

Next, Fig.~\ref{fig: K_vs_T} shows the curves of downloading latency per device versus the number of devices. 
Generally, the latency increases as the number of users grows. 
It is observed that MBA achieves the minimum latency among all schemes throughout the considered range of the number of devices.
Moreover, its latency growth is gradual in contrast with benchmarking schemes.
As a result, their latency gaps are widened as the number of devices increases.
In particular, at $20$ devices, the latency of MBA is only $60\%$ of channel-adaptive unicasting and $85\%$ of MBA without power control.
It is worth emphasizing that the advantage of MBA is more pronounced when the number of devices is large.
The reason is the existence of more candidate parameter blocks for a particular position of a given device's model architecture, facilitating knowledge reuse and hence broadcasting.
The advantage endows on MBA scalability.

\begin{figure}[t]
    \centering
    \includegraphics[width=0.45\textwidth]{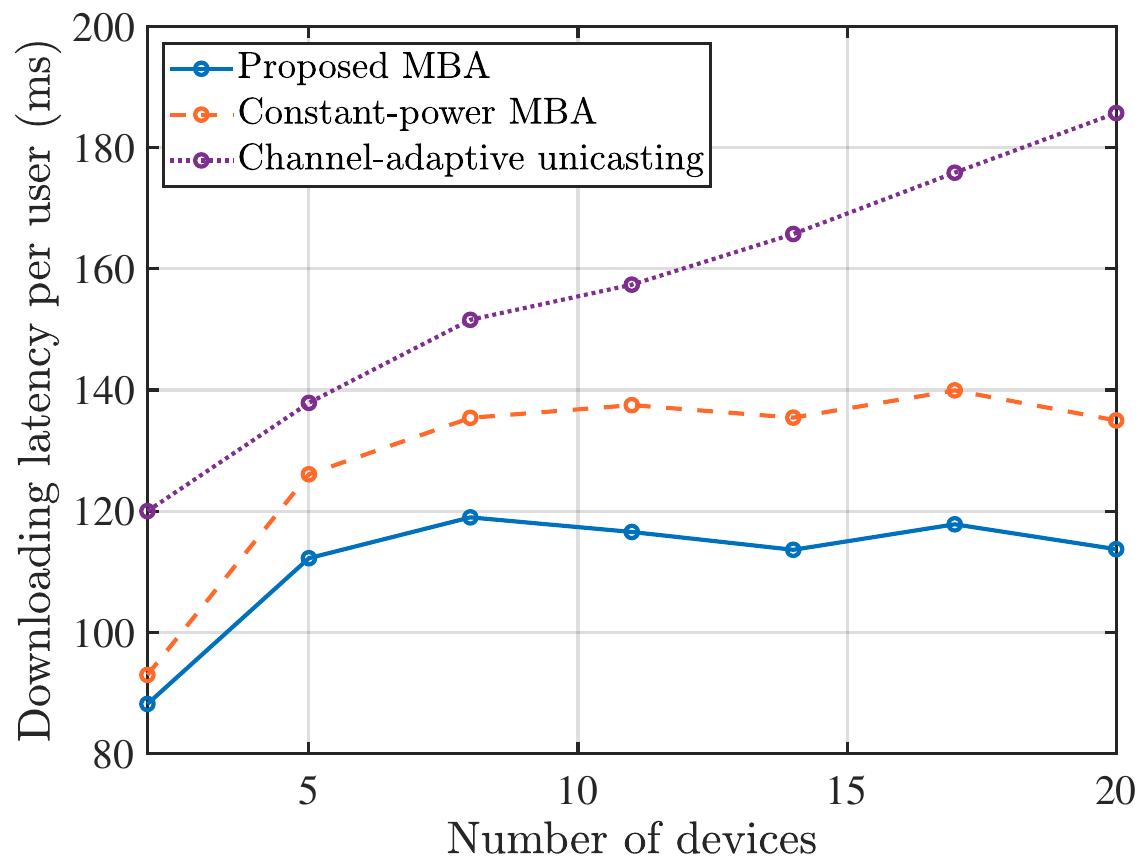}
    \caption{The effects of the number of devices on the total downloading latency.}
    \label{fig: K_vs_T}
\end{figure}

\subsection{Inference Performance of MBA} 
\label{subsec: inference performance of MBA}

In this sub-section, we consider the metric of on-device inference accuracy and a system with two devices executing two different tasks, namely classification on the CIFAR-10 and SVHN datasets.
The downloading latency requirements are assumed identical for the devices. 
We adopt the ViT base model for simulation simplicity, where the number of transitional blocks is $N=12$.  
Given a requirement, incomplete downloading may occur. 
To account for this case, each device has a backup model trained on the large dataset of ImageNet. 
Specifically, a device that fails to receive the complete desired model from the server can utilize model blocks from the backup model to assemble a model under the model architecture constraint at the cost of lower inference accuracy.

\begin{figure}[t]
    \centering
    \includegraphics[width=0.44\textwidth]{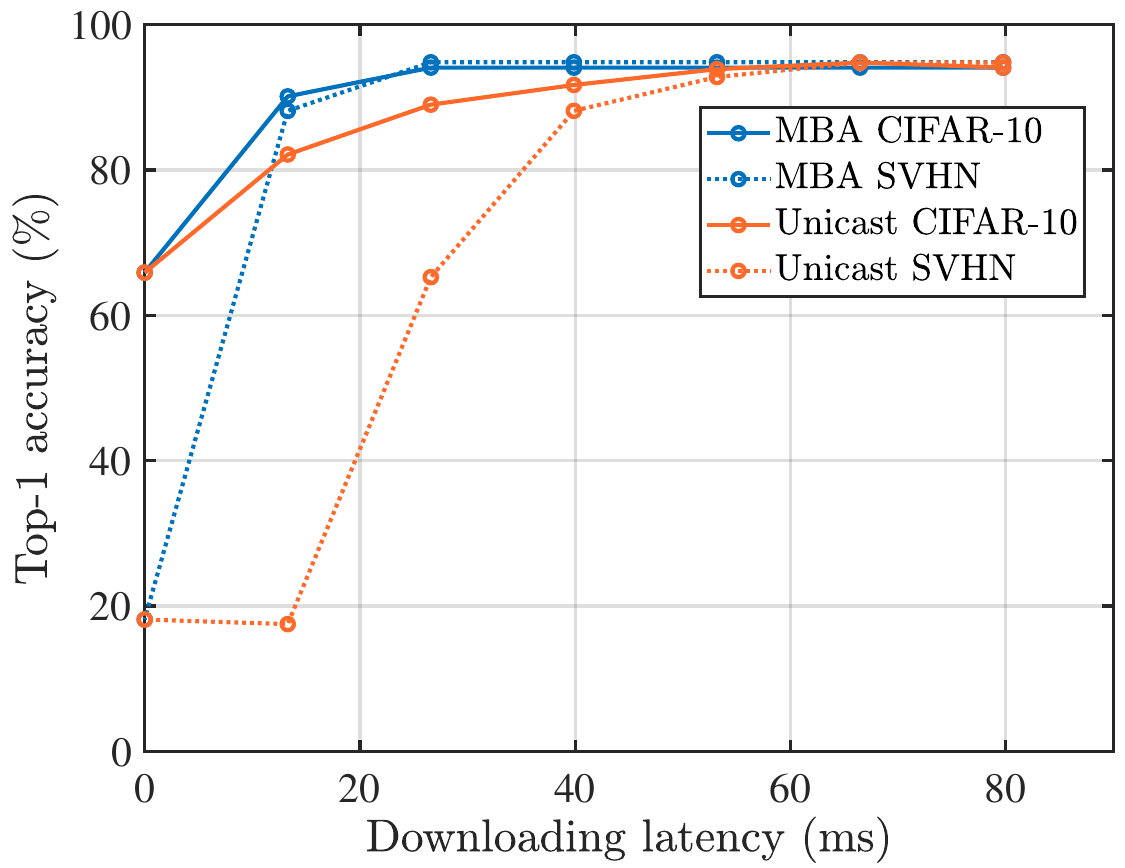}
    \caption{The comparison between the proposed MBA framework and the traditional unicast content delivery in terms of inference performance. Specifically, the curves show the effects of the given entire model downloading latency constraint on the Top-1 inference accuracy regarding respective datasets: one for CIFAR-10, and another for SVHN.}
    \label{fig: acc_vs_T}
\end{figure}

In Fig.~\ref{fig: acc_vs_T}, we plot the on-device Top-1 inference accuracy against the given downloading latency requirements.
It is observed that when devices do not download any model blocks from the edge server, i.e., latency equals $0$, the backup model lacking task-specific knowledge performs poorly, achieving only $18.14\%$ accuracy on the SVHN dataset and $65.89\%$ accuracy on the CIFAR-10 dataset. 
The increase in downloading latency leads to rapid performance improvements for all schemes and tasks.
The proposed MBA design shows a well-balanced performance of two tasks, both of which achieve close-to-maximum accuracies when the latency exceeds $20$ ms.
On the contrary, traditional unicasting causes imbalanced performance of the devices.
The scheme favors the CIFAR-10 task while the SVHN device suffers from a lower accuracy in the latency range of $0$ - $40$ ms.
The imbalance results from that the scheme tends to download more CIFAR-10 blocks so as to meet the total latency requirement.
The results demonstrate the superiority of MBA in situations with stringent latency constraints.

\section{Concluding Remarks}
\label{sec: conclusions}
% \hl{xxx: Carefully polish and correct typos.} 
In this work, we consider the 6G use case of AI downloading and propose that knowledge can be reused across different tasks to improve the efficiency of multiuser model downloading via broadcasting. 
The principle is implemented using the MBA framework.
In this framework, the usefulness of model blocks, which are stored in the cloud AI library, for different tasks, are measured by their reusability scores (i.e., Shapley values). 
Based on the scores, a minimum number of model blocks can be selected for broadcasting to meet the QoS requirements of devices with minimum transmission latency. 
Another finding in this work is the importance of task-oriented communications for unleashing the full potential of AI downloading. 
In particular, the power control for broadcasting is shown to significantly shorten the downloading latency if it is jointly designed with model block selection. 
Overall, by optimal control, MBA can realize a ``free lunch" for AI downloading via communication latency reduction without compromising multiuser task performance. 

The proposed exploitation of reusable knowledge for improving communication efficiencies opens a new direction for designing 6G task-oriented communication techniques to support distributed learning and inference at the edge. 
In the current context, it is interesting to further develop the MBA framework to incorporate more sophisticated physical-layer techniques such as broadband multi-access and broadcasting beamforming, to account for user fairness and highly skewed popularity of tasks, to support split learning or fine-tuning of large-scale foundation models (e.g., GPT) to support a wide range of downstream tasks. 
At a high level, the current work contributes to the ongoing paradigm shift from traditional content delivery networks towards AI-centric networks featuring ubiquitous AI model caching/downloading. 

\appendix

\subsection{Proof of Proposition 1}
\label{appendix: A}

To prove that problem (P2.1) is NP-complete, we can show that it is a generalization of the multiple choice \emph{max-min knapsack} (MNK) problem, which is known to be NP-complete \cite{Max-Min}.
The MNK problem is to fill the knapsack with selected items so that the minimum summed value under all scenarios is maximized.
Given problem (P2.1), the goal is to minimize the number of blocks transmitted while satisfying heterogeneous reusability score constraints, it can be transformed into a series of feasibility problems: \textit{Does there exist $Z$-block subsets that make the task with worst assembled model achieves its QoS score?}
By increasing $Z$ sequentially, i.e., $Z=N, N+1, \cdots$, we can find the optimal $Z^*$ to make the optimized worst assembled model reach the score requirement.
By normalizing the blockwise task-specific reusability score, the max-min result should be no less than $1$.
The reduction operation is in polynomial time and the max-min optimization problem can be written as 
\begin{equation*}
    \begin{aligned}
        \max \limits_{\{\alpha_{i,n}^{(k)}\}} &  \quad \min_{k} \sum_{n=1}^N \sum_{i=1}^{K} \alpha_{i,n}^{(k)} I_{i,n}^{(k)},\\
        \mathrm{s.t.~} & \quad \sum_{i=1}^{K} \sum_{n=1}^N  \alpha_{i,n} \leq Z, \\
        & \quad \sum_{i=1}^K \alpha_{i,n}^{(k)} = 1, \forall n, k, \\
        & \quad \alpha_{i,n}^{(k)} \leq \alpha_{i,n}, \forall k, i, n, \\
        & \quad \alpha_{i, n}^{(k)} \in \{0, 1\}, \forall i, n, k, \\
        & \quad \alpha_{i, n} \in \{0, 1\}, \forall i, n.
    \end{aligned}
\end{equation*}
The block number constraint is equivalent to the volume constraint in MNK.
By setting one model block for a complete model, i.e., $N=1$, the problem is exactly the same as the classical MNK problem.
When $N$ is enlarged, the architecture constraint is to enforce picking exactly one item from each position group, the inherent max-min problem is not changed. 
Therefore, we conclude that the MNK, which according to \cite{Max-Min} is NP-complete, is a special case of the transformed minimum reusability score maximization problem.
Hence, problem (P2.1) is NP-complete.
This completes the proof.

\subsection{Proof of Lemma 2}
\label{appendix: B}

We apply the KKT conditions to show Lemma 2.
The Lagrange function of problem (P3) can be written as
\begin{equation*}
    \begin{aligned}
         \mathcal{L} = & \sum_{i=1}^K \sum_{n=1}^N T_{i,n} +  \beta (\sum_{i=1}^K\sum_{n=1}^N T_{i,n}p_{i,n} - E) \\ & + \sum_{i=1}^K\sum_{n=1}^N\sum_{k=1}^K \lambda_{i,n}^{(k)} (\alpha_{i,n}^{(k)} Q - T_{i,n}B\log_2(1 + p_{i,n}H_k/N_0)),
    \end{aligned}
\end{equation*}
where $\{\lambda_{i,n}^{(k)} \geq 0\}$, $\beta \geq 0$ are constraint-associated multipliers.
Then, KKT conditions are listed below:
\begin{equation*}
\left\{
    \begin{aligned}
        & \frac{\partial\mathcal{L}}{\partial T_{i,n}} = 1 - \sum_{k=1}^{K} \lambda_{i,n}^{(k)} B\log_2(1+\frac{p_{i,n}H_k}{N_0}) + \beta p_{i,n} = 0, \\
        & \frac{\partial \mathcal{L}}{\partial p_{i,n}} = - \frac{B}{\ln 2} T_{i,n} \sum_{k=1}^K \lambda_{i,n}^{(k)} \frac{p_{i,n}}{N_0 + p_{i,n}H_k} + \beta T_{i,n} = 0, \\
        & \lambda_{i,n}^{(k)} ({\alpha_{i,n}^{(k)}}^*Q - T_{i,n}B\log_2(1 + \frac{p_{i,n}H_k}{N_0})) = 0, \\
        & \beta (\sum_{i=1}^K \sum_{n=1}^N T_{i,n}p_{i,n} - E) = 0.
    \end{aligned}
\right.
\end{equation*}

Consider one model block $s_{i,n}$.
Given the obtained model block parameter scheme $\{ {\alpha_{i,n}^{(k)}}^*\}$, if $\forall k \in \mathcal{K}, {\alpha_{i,n}^{(k)}}^* = 0 $, then $T_{i,n}=0$.
Otherwise, if $ \exists k \in \mathcal{K}, {\alpha_{i,n}^{(k)}}^* = 1$, then $T_{i,n} \neq 0$.
In the latter case, for ${\alpha_{i,n}^{(k)}}^*$, $\lambda_{i,n}^{(k)} = 0$ should hold for most $k$, and there only exists one $k \in \mathcal{K}$ that ${\alpha_{i,n}^{(k)}}^* = 1$ with $\lambda_{i,n}^{(k)} \geq 0$ because of the existence of various channel condition in completeness slackness condition to guarantee the validity of the second condition.
Since ${\alpha_{i,n}^{(k)}}^*Q - T_{i,n}B\log_2(1 + \frac{p_{i,n}H_k}{N_0}) = 0$ holds for $\lambda_{i,n}^{(k)} \geq 0$, the device with $\lambda_{i,n}^{(k)} \geq 0$ should be the one with the worst channel state in all served devices, i.e., $\{{\alpha_{i,n}^{(k)}}^* = 1\}$.
Therefore, the original power control problem (P3) can be shifted toward blockwise power allocation stated in problem (P3.1).

Now, consider the blockwise power control.
Since $\beta \neq 0$, the power control should run out all the energy budget to minimize the downloading latency, i.e., $\sum_{i=1}^{K}\sum_{n=1}^N T_{i,n} p_{i,n} = E$.
And the blockwise power $p_{i,n}$ is determined by the associated non-negative multiplier $\lambda_{i,n}^{(k)} \geq 0$.
Hence, by substituting the relationship between blockwise power and latency, the multiplier $\lambda_{i,n}^{(k)}$ can be written as a function of $\beta$.
Thus, the optimal power control policy is deduced as shown in \eqref{eq: power control}.
And the multiplier $\beta$ can be determined by the usage of energy consumption, as the result given in \eqref{eq: multiplier}.
This completes the proof.

\bibliographystyle{IEEEtran}
\bibliography{Ref}

\end{document}